\tikzstyle{vertex}=[circle, draw, inner sep=0pt, minimum size=6pt]
\tikzstyle{vertbox}=[draw, inner sep=0pt, minimum size=8pt]
\pgfplotsset{compat=1.14}
\tikzstyle{vertex}=[circle, draw, inner sep=0pt, minimum size=6pt]
\tikzstyle{svertex}=[circle, draw, inner sep=0pt, minimum size=3pt]
\tikzstyle{dvertex}=[circle, draw, inner sep=0pt, minimum size=9pt]
\tikzstyle{vertbox}=[draw, inner sep=0pt, minimum size=8pt]
\newcommand{\oset}[3][0ex]{
    \mathrel{\mathop{#3}\limits^{
		\vbox to#1{\kern-2\ex@
		\hbox{$\scriptstyle#2$}\vss}}}}
\newcommand{\cG}{\mathcal{G}}
\newcommand{\etal}{\textit{et al}.}
\newcommand{\ie}{\textit{i}.\textit{e}.}
\newcommand{\eg}{\textit{e}.\textit{g}.}
\newcommand{\PureTwoDir}{\textup{\textsc{Pure-2-Dir}}}
\newcommand{\String}{\textup{\textsc{String}}}
\newcommand{\OneString}{\textup{\textsc{1-String}}}
\newcommand{\PHPCa}{\textup{\textsc{Planar Hamiltonian}}} \newcommand{\PHPCb}{\textup{\textsc{Path Completion}}}
\newcommand{\PHPC}{\PHPCa\ \PHPCb}
\newcommand{\PHCC}{\textup{\textsc{Planar Hamiltonian Cycle Completion}}}
\newcommand{\apexgraph}[1]{#1_{\mathsf{apex}}}
\newcommand{\Grep}[1]{#1_{\mathsf{rep}}}
\newcommand{\area}{\mathsf{area}}
\newcommand{\rank}{\mathsf{rank}}
\newcommand{\xpos}{\mathsf{xpos}}
\newcommand{\ellfront}{\ell_{\mathsf{front}}}
\newcommand{\Eleft}[1]{#1_{\mathsf{left}}}
\newcommand{\Eleftedge}{\Eleft{E}}
\newcommand{\Eright}[1]{#1_{\mathsf{right}}}
\newcommand{\Erightedge}{\Eright{E}}
\newcommand{\Ecross}[1]{#1_{\mathsf{cross}}}
\newcommand{\Ecrossedge}{\Ecross{E}}
\newcommand{\Eabove}[1]{#1_{\mathsf{above}}}
\newcommand{\Eaboveedge}{\Eabove{E}}
\newcommand{\Ebelow}[1]{#1_{\mathsf{below}}}
\newcommand{\Ebelowedge}{\Ebelow{E}}
\newcommand{\subdivide}[1]{#1_{k\operatorname{-{\mathsf{div}}}}}
\newcommand{\planar}[1]{#1_{\mathsf{pl}}}
\newcommand{\draw}{\textsc{Draw}}
\renewcommand{\curve}[1]{\mathtt{c}(#1)}
\declaretheorem[numberlike=equation]{Theorem}
\declaretheorem[numberlike=equation]{Lemma}
\declaretheorem[numberlike=equation]{Corollary}
\declaretheoremstyle[bodyfont=\it,qed=$\lozenge$]{defstyle}
\declaretheorem[numberlike=equation,style=defstyle]{Definition}
\declaretheorem[numberlike=equation]{Observation}
\patchcmd{\ALG@step}{\addtocounter{ALG@line}{1}}{\refstepcounter{ALG@line}}{}{}
\newcommand{\ALG@lineautorefname}{Line}
\begin{document}

	\title{Finding Geometric Representations of Apex Graphs is NP-Hard}
	
    
    \author{Dibyayan Chakraborty\thanks{Indian Institute of Science, Bengaluru. E-mail: \texttt{dibyayancg@gmail.com}}~ and Kshitij Gajjar\thanks{National University of Singapore, Singapore. A part of this work was done when the author was a postdoctoral researcher at Technion, Israel. E-mail: \texttt{kshitijgajjar@gmail.com}}}
    
    \maketitle
    
    \setstretch{1.1}

\begin{abstract}
Planar graphs can be represented as intersection graphs of different types of geometric objects in the plane, \eg, circles (Koebe, 1936), line segments (Chalopin \& Gon{\c{c}}alves, 2009), \textsc{L}-shapes (Gon{\c{c}}alves~\etal, 2018). 
For general graphs, however, even deciding whether such representations exist is often $\NP$-hard. We consider apex graphs, \ie, graphs that can be made planar by removing one vertex from them. 
We show, somewhat surprisingly, that deciding whether geometric representations exist for apex graphs is $\NP$-hard.

More precisely, we show that for every positive integer $k$, recognizing every graph class $\mathcal{G}$ which satisfies $\PureTwoDir\subseteq\mathcal{G}\subseteq\OneString$ is $\NP$-hard, even when the input graphs are apex graphs of girth at least $k$. Here, $\PureTwoDir$ is the class of intersection graphs of axis-parallel line segments (where intersections are allowed only between horizontal and vertical segments) and $\OneString$ is the class of intersection graphs of simple curves (where two curves share at most one point) in the plane. This partially answers an open question raised by Kratochv{\'\i}l \& Pergel (2007).

Most known $\NP$-hardness reductions for these problems are from variants of 3-SAT. We reduce from the $\PHPC$ problem, which uses the more intuitive notion of planarity. As a result, our proof is much simpler and encapsulates several classes of geometric graphs.

\medskip\noindent\textit{Keywords:} Planar graphs, apex graphs, NP-hard, Hamiltonian path completion, recognition problems, geometric intersection graphs, 1-STRING,
PURE-2-DIR.
\end{abstract}

\section{Introduction}

The recognition a graph class is the decision problem of determining whether a given simple, undirected, unweighted graph belongs to the graph class. Recognition of graph classes is a fundamental research topic in discrete mathematics with applications in VLSI design~\cite{chung1983diogenes,chung1984,sherwani2007}. In particular, when the graph class relates to intersection patterns of geometric objects, the corresponding recognition problem finds usage in disparate areas like map labelling~\cite{agarwal1998label}, wireless networks~\cite{kuhn2008ad}, and computational biology~\cite{xu2006fast}. The applicability of the recognition problem further increases if the given graph is planar.

Several $\NP$-hard graph problems admit efficient algorithms for planar graphs~\cite{stockmeyer1973,hopcroft1974,hadlock1975finding}. There are many reasons behind this, owing to the nice structural properties of planar graphs~\cite{tutte1963draw,lipton1979separator}. Since most of the graphs are non-planar (\ie, the fraction of $n$-vertex graphs that are planar approaches zero as $n$ tends to infinity), it is worthwhile to explore whether these efficient algorithms continue to hold for graphs that are ``close to'' planar. Specifically, we deal with apex graphs.

\begin{Definition}\cite{welsh1993,thilikos1997}. A graph is an apex graph if it contains a vertex whose removal makes it planar.
\end{Definition}

Since every $n$-vertex apex graph contains an $(n-1)$-vertex planar graph as an induced subgraph, one might expect apex graphs to retain some of the properties of planar graphs, in the hope that efficient algorithms for planar graphs carry forward to apex graphs as well. In this paper, we show that for recognition problems for several classes of geometric intersection graphs, this is not the case.

The intersection graph of a collection $C$ of sets is a graph with vertex set $C$ in which two vertices are adjacent if and only if their corresponding sets have a non-empty intersection. For our main result, we are particularly interested in two natural and well-studied classes of geometric intersection graphs (\ie, when the sets in the collection $C$ are geometric objects in the plane) called $\PureTwoDir$ and $\OneString$, shown\footnote{This illustration is only for representational purposes.} in~\autoref{fig:P2DG1S}.

\begin{Definition} \label{def:p2d}
	$\PureTwoDir$ is the class of all graphs $G$, such that $G$ is the intersection graph of axis-parallel line segments in the plane, where intersections are allowed only between horizontal and vertical segments.
\end{Definition}

\begin{Definition} \label{def:1str}
	$\OneString$ is the class of all graphs $G$, such that $G$ is the intersection graph of simple curves\footnote{Formally, a simple curve is a subset of the plane which is homeomorphic to the interval $[0, 1]$.} in the plane, where two intersecting curves share exactly one point, at which they must cross each other.
\end{Definition}

\begin{mdframed}[style=MyFrame]
	\begin{Theorem}[Main Result]\label{thm:main}
		Let $g$ be a positive integer and $\cG$ be a graph class such that $$\PureTwoDir \subseteq \cG \subseteq \OneString.$$ Then it is $\NP$-hard to decide whether an input graph belongs to $\cG$, even when the input graphs are restricted to bipartite apex graphs of girth at least $g$.
	\end{Theorem}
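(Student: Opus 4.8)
The plan is to reduce from the $\NP$-hard problem $\PHPC$: given a planar graph $H$, decide whether some planar supergraph of $H$ on the same vertex set has a Hamiltonian path. Given such an $H$ on $n$ vertices, fix an odd integer $k\ge g$, let $\subdivide{H}$ be the graph obtained from $H$ by subdividing every edge exactly $k$ times, and let $G$ be obtained from $\subdivide{H}$ by adding one new vertex $a$ made adjacent to every vertex of $H$ and to no subdivision vertex; this $G$ is the output, clearly computable in polynomial time. Three properties are immediate. Since $G-a=\subdivide{H}$ is planar, $G$ is an apex graph. Since $k$ is odd, all original vertices of $H$ lie on one side of the bipartition of $\subdivide{H}$, so placing $a$ on the other side shows $G$ is bipartite. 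And since subdividing already makes $V(H)$ independent in $\subdivide{H}$, every cycle of $G$ has length at least $k+1\ge g$ (a cycle avoiding $a$ lies in $\subdivide{H}$; a cycle through $a$ traverses at least a full subdivided edge), so $G$ has girth at least $g$. It remains to show that $H$ is a YES-instance of $\PHPC$ if and only if $G\in\cG$. Since $\PureTwoDir\subseteq\cG\subseteq\OneString$, this follows from the two implications (a) if $H$ is a YES-instance then $G\in\PureTwoDir$, and (b) if $G\in\OneString$ then $H$ is a YES-instance.

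For (a), take a planar supergraph $H^{+}$ of $H$ with a Hamiltonian path $P=v_{\sigma(1)}\cdots v_{\sigma(n)}$ and fix a planar drawing of it. Cutting the sphere along the arc $P$ yields a closed disk on whose boundary circle the vertices of $H$ appear as $v_{\sigma(1)},\dots,v_{\sigma(n)}$ along one side of $P$ and again in reverse along the other; the edges of $H$ not on $P$ become pairwise non-crossing chords of this disk (this is, in effect, a two-page book embedding of $\subdivide{H}$ with the original vertices on the spine). I would then reproduce this picture rectilinearly: draw $\curve{a}$ as one long horizontal segment, place $\curve{v_i}$ as a tall vertical segment stabbing $\curve{a}$ with the $v_i$ ordered left to right by $\sigma$, and realise each subdivided edge of $\subdivide{H}$ as a staircase of $k$ alternately horizontal and vertical segments running strictly above the line of $\curve{a}$, strictly below it, or around one of the two free ends of $\curve{a}$, according to how the corresponding chord sits in the disk. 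Because chords on a common page do not cross and distinct subdivided edges share no vertex outside $V(H)$, the only intersections created are the intended ones and $\curve{a}$ meets exactly $\curve{v_1},\dots,\curve{v_n}$; hence $G\in\PureTwoDir$.

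For (b), suppose $G$ has a $\OneString$ representation. Each $\curve{v_i}$ crosses $\curve{a}$ exactly once, and recording these crossings along $\curve{a}$ yields a linear order $\pi$ of $V(H)$. For an edge $e=v_iv_j$ of $H$, its subdivided path is realised by a chain of curves $\curve{w_1},\dots,\curve{w_k}$ in which only consecutive curves meet, which meets $\curve{v_i}$ and $\curve{v_j}$ once each at its two ends and meets no other $\curve{v_m}$ and not $\curve{a}$; inside this chain I pick a simple arc $\alpha_e$ joining a point of $\curve{v_i}$ to a point of $\curve{v_j}$. Distinct edges of $H$ involve disjoint families of pairwise non-intersecting curves away from their common endpoints in $V(H)$, so the arcs $\{\alpha_e\}$ are pairwise disjoint except on the curves $\curve{v_i}$, and each $\alpha_e$ is internally disjoint from every $\curve{v_m}$ and from $\curve{a}$. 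Now contract each $\curve{v_i}$ — a contractible arc — to a single point $\bar v_i$: the arcs $\alpha_e$ become a plane drawing of $H$, while $\curve{a}$, which met each $\curve{v_i}$ exactly once and otherwise avoided all of $\subdivide{H}$, becomes a simple arc through $\bar v_{\pi(1)},\dots,\bar v_{\pi(n)}$ in this order, meeting the rest of the drawing only at these points. Adding the sub-arcs of (the image of) $\curve{a}$ between consecutive $\bar v_{\pi(i)}$ as new edges then exhibits a planar supergraph of $H$ with the Hamiltonian path $v_{\pi(1)}\cdots v_{\pi(n)}$, i.e.\ $H$ is a YES-instance.

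I expect essentially all of the difficulty to lie in implication (b), and more precisely in making the topological surgery precise: one must thicken $\curve{a}$, and later the curves $\curve{v_i}$, to sufficiently thin regular neighbourhoods, check that no unintended incidences are introduced on passing to those neighbourhoods, and verify that the contraction really produces a plane drawing of $H$ together with a superimposed arc playing the role of the Hamiltonian path. Implication (a) is more routine but still needs care in the rectilinear (staircase) routing, especially for the chords that must detour around an end of $\curve{a}$ — which is why $\curve{a}$ is drawn as a finite segment with two free ends. The bipartiteness, girth and apex conditions are immediate from the choice of $k$, and the $\NP$-hardness of $\PHPC$ is taken as the external input.
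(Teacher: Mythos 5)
Your proposal is correct and follows essentially the same route as the paper: the identical reduction from \PHPC\ (full odd $k$-subdivision plus an apex adjacent to the original vertices), with one direction realized by a rectilinear staircase drawing built from the vertex ordering along a Hamiltonian path (your cut-open two-page picture is exactly the paper's ``front line drawing''), and the other by contracting the vertex curves of a \OneString\ representation to points so that $\curve{a}$ becomes a Hamiltonian-path arc in a planar supergraph. The places you flag as delicate (the perturbation bookkeeping in the staircase routing and the regular-neighbourhood contraction argument) are precisely the parts the paper spends its technical effort on, so your plan is a faithful outline of the actual proof.
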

	\vspace{-0.5em}
\end{mdframed}

\begin{figure}[h]
\centering
\includegraphics[width=1\textwidth]{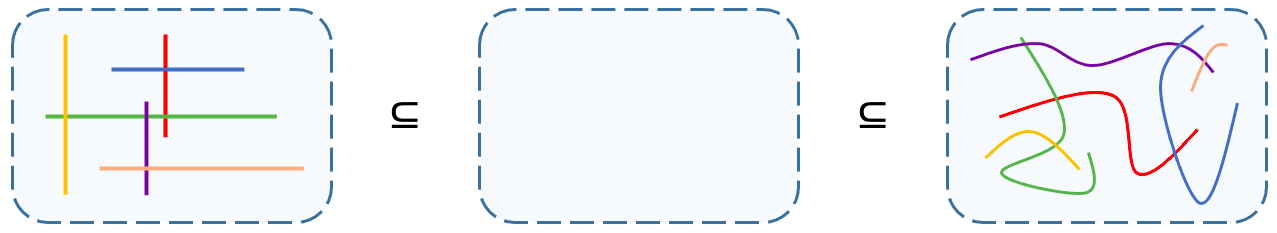}
\begin{tikzpicture}
\node at (-4.7,0) {\large $\PureTwoDir$};
\node at (-0.2,0) {\large $\cG$};
\node at (4.3,0) {\large $\OneString$};
\end{tikzpicture}
\caption{A visual depiction of our main result (\autoref{thm:main}).
}\label{fig:P2DG1S}
\end{figure}

In~\autoref{fig:P2DG1S}, note that in the $\PureTwoDir$ (\autoref{def:p2d}) representation, two vertical (similarly, two horizontal) line segments do not intersect each other. Also note that in the $\OneString$ (\autoref{def:1str}) representation, two intersecting curves share exactly one point, and the curves do not touch (tangentially) but instead cross each other at that point. Our main result states that the recognition of every graph class $\cG$ that lies between these two is $\NP$-hard, even if the input graph is bipartite, apex, and has large girth. Some candidate graph classes for $\cG$ are shown in~\autoref{sec:significant}.

\medskip \noindent \textbf{Paper Roadmap:} In~\autoref{sec:significant}, we highlight some implications of our main result and in~\autoref{sec:related}, we survey the existing literature on the topic. In~\autoref{sec:proof-technique}, we describe our proof techniques and give an overview of our proof. In~\autoref{sec:proof}, we prove~\autoref{thm:main} and draw some conclusions in~\autoref{sec:conclude}.

\subsection{Significance of the Main Result}\label{sec:significant}

Our main result has several corollaries, obtained by substituting different values for the graph class $\cG$. Recall that the recognition of a graph class $\mathcal{G}$ asks if a given graph $G$ is a member of $\mathcal{G}$. 

$\String$ is the class of intersection graphs of simple curves in the plane. Kratochv{\'\i}l \& Pergel~\cite{kratochvil2007geometric} posed the question of determining the complexity of recognizing $\String$ when the inputs are restricted to graphs with large girth. The above question was answered by Musta{\c{t}}{\u{a}} \& Pergel~\cite{mustac2019}, where they showed that recognizing $\String$ is $\NP$-hard, even when the inputs are restricted to graphs of arbitrarily large girth. However, the graphs they constructed were far from planar. Since $\OneString \subsetneq \String$, the following corollary of our main result partially answers Kratochv{\'\i}l \& Pergel's~\cite{kratochvil2007geometric} question when the inputs are restricted to apex graphs of large girth.

\begin{Corollary}
	For every positive integer $g$, recognizing $\OneString$ is $\NP$-hard, even for bipartite apex graphs with girth at least $g$.
\end{Corollary}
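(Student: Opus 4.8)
The plan is to obtain this statement as an immediate special case of \autoref{thm:main}. Taking $\cG = \OneString$, the chain of inclusions $\PureTwoDir \subseteq \cG \subseteq \OneString$ required by the theorem collapses to the single containment $\PureTwoDir \subseteq \OneString$, so the whole proof reduces to verifying that every $\PureTwoDir$ graph is a $\OneString$ graph.

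For that containment, I would start from a $\PureTwoDir$ representation of a graph $G$ and argue that the \emph{same} family of segments (up to an arbitrarily small perturbation) is already a valid $\OneString$ representation. Indeed, each axis-parallel segment is homeomorphic to $[0,1]$, hence a simple curve; two horizontal segments (and likewise two vertical segments) are disjoint by definition of $\PureTwoDir$; and a horizontal segment lies on a line $y = c$ while a vertical one lies on a line $x = d$, so they meet in at most the single point $(d,c)$. The only thing to clean up is that this point should be a proper crossing rather than a tangential ``$T$''-contact (an endpoint of one segment landing on the relative interior of the other); this is fixed by extending the offending segment by an infinitesimal amount past that endpoint, which creates no new intersections and hence leaves the intersection graph unchanged. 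Therefore $\PureTwoDir \subseteq \OneString$.

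Combining the two observations, \autoref{thm:main} instantiated at $\cG = \OneString$ yields exactly the claim: for every positive integer $g$, recognizing $\OneString$ is $\NP$-hard even on bipartite apex graphs of girth at least $g$. There is essentially no obstacle here beyond the (routine) containment check; all of the difficulty is already absorbed into the proof of the main theorem, i.e.\ into the reduction from $\PHPC$ that produces bipartite apex instances of arbitrarily large girth.
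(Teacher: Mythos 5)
Your proposal is correct and matches the paper's (implicit) derivation: the corollary is obtained by instantiating \autoref{thm:main} with $\cG=\OneString$, and the only thing to check is the standard containment $\PureTwoDir\subseteq\OneString$, which your perturbation argument (turning endpoint contacts into proper crossings without creating new intersections) handles correctly. The paper simply states the corollary without spelling this out, so your write-up is, if anything, slightly more complete.
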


Chalopin \& Gon{\c{c}}alves~\cite{chalopin2009scheinerman} showed that every planar graph can be represented as an intersection graph of line segments in polynomial time. The following corollary shows that a similar result does not hold for apex graphs.

\begin{Corollary} \label{cor:lineseg}
	For every positive integer $g$, recognizing intersection graphs of line segments is $\NP$-hard, even for bipartite apex graphs with girth at least $g$.
\end{Corollary}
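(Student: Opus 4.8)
The plan is to derive Corollary~\ref{cor:lineseg} as an immediate consequence of the Main Result (\autoref{thm:main}). The key observation is that the class of intersection graphs of straight line segments in the plane, call it $\textsc{Seg}$, is sandwiched between $\PureTwoDir$ and $\OneString$: every $\PureTwoDir$ representation is by definition a segment representation (using only horizontal and vertical segments), so $\PureTwoDir \subseteq \textsc{Seg}$; and every straight line segment is in particular a simple curve, with two distinct segments meeting in at most one point where (if they are not collinear) they cross, so $\textsc{Seg} \subseteq \OneString$. One has to be slightly careful about the ``crossing, not touching'' condition in \autoref{def:1str}: if two segments share an endpoint or overlap along a subsegment, this is not a valid $\OneString$ representation. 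However, this is a standard general-position perturbation: given any segment representation of a graph $G$, one can perturb the segments (rotate and extend/shorten slightly) so that any two intersecting segments cross transversally at a single interior point and no three segments are concurrent, without changing the intersection graph. Hence $\textsc{Seg} \subseteq \OneString$, and so $\PureTwoDir \subseteq \textsc{Seg} \subseteq \OneString$.

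With this sandwich in hand, \autoref{thm:main} applied to $\cG = \textsc{Seg}$ yields directly that recognizing $\textsc{Seg}$ is $\NP$-hard even on bipartite apex graphs of girth at least $g$, for every positive integer $g$, which is exactly the statement of Corollary~\ref{cor:lineseg}. The contrast with Chalopin \& Gon{\c{c}}alves~\cite{chalopin2009scheinerman} is then that, although every planar graph is a segment graph (and one can even find such a representation in polynomial time), adding a single apex vertex already makes the recognition problem intractable.

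The only genuine content here is checking the two inclusions defining the sandwich, and of those the mildly delicate point is the general-position argument showing $\textsc{Seg}\subseteq\OneString$; everything else is a substitution into \autoref{thm:main}. I do not anticipate a real obstacle — the perturbation argument is routine and the containments are essentially definitional — so the ``proof'' is really a one-line deduction once the sandwiching inclusions are noted.
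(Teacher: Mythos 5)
Your proposal is correct and follows exactly the route the paper intends: the corollary is obtained by substituting the class of segment intersection graphs for $\cG$ in \autoref{thm:main}, using the sandwich $\PureTwoDir\subseteq\textsc{Seg}\subseteq\OneString$. The paper leaves these two inclusions implicit, whereas you spell out the (standard, and correctly handled) general-position perturbation needed for $\textsc{Seg}\subseteq\OneString$; that is the only difference.
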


Gon{\c{c}}alves, Isenmann \& Pennarun~\cite{gonccalves2018Lshapes} showed that every planar graph can be represented as an intersection graph of \textsc{L}-shapes in polynomial time. The following corollary shows that a similar result does not hold for apex graphs.

\begin{Corollary} \label{cor:lshapes}
	For every positive integer $g$, recognizing intersection graphs of \textsc{L}-shapes is $\NP$-hard, even for bipartite apex graphs with girth at least $g$.
\end{Corollary}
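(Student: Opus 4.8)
The plan is to derive this corollary directly from our main theorem (\autoref{thm:main}), exactly as with the two preceding corollaries. Writing $\mathcal{L}$ for the class of intersection graphs of \textsc{L}-shapes, it suffices to establish the sandwich
$$\PureTwoDir \subseteq \mathcal{L} \subseteq \OneString,$$
and then apply \autoref{thm:main} with $\cG = \mathcal{L}$; this immediately gives $\NP$-hardness of recognizing $\mathcal{L}$ on bipartite apex graphs of girth at least $g$. So the whole content of the argument is the verification of the two containments, which I would state as two short lemmas.

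For the lower containment $\PureTwoDir \subseteq \mathcal{L}$, I would begin from a $\PureTwoDir$ representation, in which every object is a horizontal or a vertical segment and no two horizontal (resp.\ vertical) segments meet. A segment is the degenerate \textsc{L}-shape in which one arm has length zero, so if degenerate \textsc{L}-shapes are allowed this is immediate; otherwise I would attach to each segment a sufficiently short perpendicular stub at one endpoint (upward for a horizontal segment, rightward for a vertical one), turning it into a genuine \textsc{L}-shape. Since the representation is finite, the stub lengths can be chosen small enough to meet no other object, so the intersection graph is unchanged. This step is routine.

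The substantive step is the upper containment $\mathcal{L} \subseteq \OneString$. Each \textsc{L}-shape is a simple curve, so every $G \in \mathcal{L}$ is already representable by strings; the issue is to ensure that adjacent curves meet in \emph{exactly one} point and \emph{cross} there. I would first put the corners of the \textsc{L}-shapes in general position, giving all corners pairwise distinct $x$-coordinates and pairwise distinct $y$-coordinates. Then no two horizontal arms are collinear, no two vertical arms are collinear, and no corner lies on another shape's arm, so the only possible intersections are transversal crossings of one shape's horizontal arm with another's vertical arm. Because all \textsc{L}-shapes share the same fixed orientation, a short case analysis shows that two of them can cross twice only when two of their arms are collinear, a configuration that general position has ruled out; hence any two \textsc{L}-shapes now meet in at most one transversal point, yielding a $\OneString$ representation \emph{provided} the perturbation preserves the adjacency structure.

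The main obstacle, and the only place needing care, is exactly this last proviso. A perturbation to general position cannot create new edges (non-adjacent shapes are disjoint, and disjointness is preserved by small perturbations) and preserves every adjacency already realized by a transversal crossing; the danger is a \emph{degenerate} adjacency realized in the original representation only by a corner touching an arm or by two arms overlapping along a segment. I would handle these finitely many degeneracies one at a time, nudging the offending corner by a tiny amount in the direction that turns the degenerate contact into a genuine single crossing (for instance, a corner lying on another shape's horizontal arm is shifted slightly downward so that its vertical arm then crosses that horizontal arm at one point), and checking that a small enough shift disturbs no other pair. Once this bookkeeping is complete we obtain a valid $\OneString$ representation of $G$, which finishes the upper containment and hence the corollary.
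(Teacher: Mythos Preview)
Your proposal is correct and follows exactly the paper's approach: the paper states this corollary without proof, treating it as an immediate instance of \autoref{thm:main} with $\cG$ taken to be the class of \textsc{L}-shape intersection graphs. You go a step further than the paper by actually sketching why the sandwich $\PureTwoDir \subseteq \mathcal{L} \subseteq \OneString$ holds; your verification of both containments (in particular the observation that two \textsc{L}-shapes of the same fixed orientation in general position can meet in at most one transversal point, since a double crossing would force their corners to coincide) is sound.
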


Our main result also has a connection to a graph invariant called~\emph{boxicity}. The boxicity of a graph is the minimum integer $d$ such that the graph can be represented as an intersection graph of $d$-dimensional axis-parallel boxes. Thomassen showed three decades ago that the boxicity of every planar graph is either one, two or three~\cite{thomassen1986interval}. It is easy to check if the boxicity of a planar graph is one~\cite{booth1976testing}. However, the complexity of determining whether a planar graph has boxicity two or three is not yet known. A result of Hartman, Newman \& Ziv~\cite{hartman1991} states that the class of bipartite graphs with boxicity 2 is precisely $\PureTwoDir$. Combined with our main result, this implies that determining the boxicity of apex graphs is $\NP$-hard.

\begin{Corollary} \label{cor:box}
	For every positive integer $g$, recognizing graphs with boxicity $2$ is $\NP$-hard, even for bipartite apex graphs with girth at least $g$.
\end{Corollary}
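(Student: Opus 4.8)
The plan is to read the corollary off \autoref{thm:main} by instantiating it with $\cG = \PureTwoDir$. First I would check that this choice is admissible, i.e.\ that $\PureTwoDir \subseteq \PureTwoDir \subseteq \OneString$; the left inclusion is trivial, and the right one holds because every axis-parallel segment is homeomorphic to $[0,1]$ and hence a simple curve, and in a $\PureTwoDir$ representation the only intersecting pairs of segments are (horizontal, vertical) pairs, each of which meets in a single point -- and after a routine perturbation of the representation (slightly extending segments so that each such meeting point lies in the interior of both segments) this point is a proper crossing, so the representation is already a valid $\OneString$ representation. Applying \autoref{thm:main} with this $\cG$ then shows that recognizing $\PureTwoDir$ is $\NP$-hard even when the input is a bipartite apex graph of girth at least $g$.

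Next I would invoke the characterization of Hartman, Newman \& Ziv~\cite{hartman1991}: a \emph{bipartite} graph has boxicity at most $2$ if and only if it belongs to $\PureTwoDir$. The hard instances produced above are already bipartite, so on this family of inputs the predicate ``$G$ has boxicity $2$'' coincides, instance by instance, with the predicate ``$G \in \PureTwoDir$''. Hence the identity map on instances is a (trivial) reduction from the $\NP$-hard problem of recognizing $\PureTwoDir$ on bipartite apex graphs of girth at least $g$ to the problem of recognizing boxicity-$2$ graphs on the same class; apex-ness, bipartiteness and the girth bound are all preserved simply because the instance is not modified.

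There is essentially no genuine obstacle: the real content is entirely in \autoref{thm:main}. The only points meriting a moment of care are (i) the verification of $\PureTwoDir \subseteq \OneString$, which is what licenses applying \autoref{thm:main} with $\cG = \PureTwoDir$, and (ii) the observation that the Hartman--Newman--Ziv equivalence is an equivalence with boxicity \emph{at most} $2$ and holds precisely for bipartite graphs -- exactly the regime in which our hard instances live -- so that ``boxicity $2$'' and ``membership in $\PureTwoDir$'' are interchangeable on these inputs.
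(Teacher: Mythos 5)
Your proposal is correct and follows essentially the same route as the paper: the paper likewise derives \autoref{cor:box} by combining \autoref{thm:main} with the Hartman--Newman--Ziv characterization of bipartite boxicity-$2$ graphs as exactly $\PureTwoDir$, relying on the bipartiteness of the hard instances. Your explicit verification of $\PureTwoDir\subseteq\OneString$ and your remark about ``boxicity at most $2$'' versus ``boxicity $2$'' are careful touches the paper leaves implicit, but the argument is the same.
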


\textsc{Conv} is the class of intersection graphs of convex objects in the plane. Kratochv{\'\i}l \& Pergel~\cite{kratochvil2007geometric} asked if recognizing~\textsc{Conv} remains $\NP$-hard when the inputs are restricted to graphs with large girth. Note that the class of graphs with boxicity $2$ (alternatively, intersection graphs of rectangles) is a subclass of~\textsc{Conv}. Similarly, intersection graphs of line segments on the plane is also a subclass of~\textsc{Conv}. Hence,~\autoref{cor:lineseg} and~\autoref{cor:box} also partially address the aforementioned open question of Kratochv{\'\i}l \& Pergel~\cite{kratochvil2007geometric}.

A graph is $c$-apex if it contains a set of $c$ vertices whose removal makes it planar. This is a natural generalization of apex graphs. Our main result implies that no graph class $\cG$ satisfying $\PureTwoDir\subseteq\cG\subseteq\OneString$ can be recognized in $n^{f(c)}$ time, where $f$ is a computable function depending only on $c$. This means recognizing $\cG$ is $\XP$-hard, and thus not fixed-parameter tractable~\cite{niedermeier2006invitation,downey2012parameterized} for $c$-apex graphs when parameterized by $c$.

\begin{Corollary} \label{cor:fpt}
	Let $g$ be a positive integer and $\cG$ be a graph class such that $$\PureTwoDir\subseteq\cG\subseteq\OneString.$$ Then assuming $\P\neq\NP$, there is no $f(c)\cdot n^{O(1)}$ time algorithm that recognizes $\cG$ (where $f$ is a computable function depending only on $c$), even for bipartite $c$-apex graphs with girth at least $g$.
\end{Corollary}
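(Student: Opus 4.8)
The plan is to obtain this as an immediate consequence of \autoref{thm:main}, using only the fact that apex graphs sit inside the class of $c$-apex graphs. First I would recall that \autoref{thm:main} already establishes that recognizing $\cG$ is $\NP$-hard when the input is restricted to bipartite apex graphs of girth at least $g$; let $\mathcal{H}$ denote this family of hard instances. The one structural remark needed is that every apex graph is also a $c$-apex graph for every integer $c \geq 1$: if deleting a single vertex $v$ from a graph $H$ yields a planar graph, then deleting any $c$-element vertex set containing $v$ also yields a planar graph, since planarity is preserved under vertex deletion. Consequently every graph in $\mathcal{H}$ is, for every $c \geq 1$, a bipartite $c$-apex graph of girth at least $g$ — with no modification to the instances at all, so the bipartiteness and girth conditions in the statement are inherited for free.

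Next I would argue by contradiction. Suppose there were an algorithm $\mathcal{A}$ that recognizes $\cG$ on bipartite $c$-apex graphs of girth at least $g$ in time $f(c)\cdot n^{O(1)}$ for some computable function $f$ depending only on $c$. Instantiate $\mathcal{A}$ with the parameter value $c = 1$. By the previous paragraph, every instance in $\mathcal{H}$ is a legitimate input for this parameter setting, and on such inputs $\mathcal{A}$ runs in time $f(1)\cdot n^{O(1)}$. Since $f(1)$ is a fixed constant, this is simply $n^{O(1)}$, i.e., polynomial time. Hence $\mathcal{A}$ would decide membership in $\cG$ for the family $\mathcal{H}$ in polynomial time, contradicting the $\NP$-hardness proved in \autoref{thm:main}, unless $\P = \NP$. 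This establishes the corollary; the identical specialization to $c=1$ also rules out any $n^{f(c)}$-time algorithm, so recognizing $\cG$ restricted to $c$-apex graphs is in fact $\XP$-hard with respect to $c$.

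There is essentially no technical obstacle: the corollary is a routine ``padding-free'' reduction, with the only point meriting a line of verification being the containment of apex graphs in $c$-apex graphs and the observation that no new construction is required, so that large girth and bipartiteness carry over verbatim from \autoref{thm:main}. If anything, the subtle part is purely expository — being careful to phrase the contradiction in terms of fixing the parameter $c$ to a constant rather than letting it grow, which is exactly what distinguishes an $\XP$ (respectively $\mathsf{FPT}$) lower bound of this kind from the bare $\NP$-hardness already in hand.
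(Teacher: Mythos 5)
Your proposal is correct and matches the paper's own (only sketched) justification: the paper likewise derives this corollary directly from \autoref{thm:main} by observing that the hard instances are already $1$-apex, so any $f(c)\cdot n^{O(1)}$-time algorithm specialized to $c=1$ would solve an $\NP$-hard recognition problem in polynomial time. No further comment is needed.
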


Thanks to a long line of work due to Robertson \& Seymour~\cite{robertsonseymour}, several graph classes can be characterized by a finite set of forbidden minors. For example, planar graphs are $(K_5,K_{3,3})$-minor free graphs. Interestingly, the set of forbidden minors is not known for apex graphs, although it is known that the set is finite~\cite{gupta1991}. However, it is easy to see that apex graphs are $K_6$-minor free, which means that our main result has the following implication.

\begin{Corollary} \label{cor:minorfree}
	Let $g$ be a positive integer and $\cG$ be a graph class such that $$\PureTwoDir\subseteq\cG\subseteq\OneString.$$ Then it is $\NP$-hard to decide whether an input graph belongs to $\cG$, even for bipartite $K_6$-minor free graphs with girth at least $g$.
\end{Corollary}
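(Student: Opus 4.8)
The plan is to obtain Corollary~\ref{cor:minorfree} as an immediate consequence of \autoref{thm:main}, by observing that the hard instances it already produces lie in the target class. Concretely, I would show that every apex graph is $K_6$-minor free, so that the class of bipartite apex graphs of girth at least $g$ is contained in the class of bipartite $K_6$-minor free graphs of girth at least $g$. Granting this, any algorithm that correctly decides membership in $\cG$ on all bipartite $K_6$-minor free graphs of girth at least $g$ would, in particular, decide membership in $\cG$ on all bipartite apex graphs of girth at least $g$ (a subclass), and the instances output by the reduction of \autoref{thm:main} — which already land in this subclass and separate the YES and NO cases correctly — would still separate them correctly when read as instances of the superclass problem. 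Since \autoref{thm:main} rules out a polynomial-time algorithm for the subclass unless $\P=\NP$, the same reduction witnesses $\NP$-hardness for the superclass.

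It then remains to verify the claim that apex graphs are $K_6$-minor free. I would first record the elementary minor-deletion fact: if a graph $H$ has a $K_t$ minor, then for every vertex $v$ of $H$ the graph $H-v$ has a $K_{t-1}$ minor. Indeed, fix branch sets $V_1,\dots,V_t$ realizing a $K_t$ minor model in $H$; since these sets are pairwise disjoint, $v$ lies in at most one of them, so after relabelling we may assume $v\notin V_1\cup\cdots\cup V_{t-1}$. Each $V_i$ with $i\le t-1$ induces the same (connected) subgraph in $H-v$ as in $H$, and for every pair $i,j\le t-1$ the model edge joining $V_i$ and $V_j$ has both endpoints in $V_i\cup V_j$, hence avoids $v$ and survives in $H-v$; thus $V_1,\dots,V_{t-1}$ form a $K_{t-1}$ minor model in $H-v$. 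Now let $G$ be an apex graph with apex vertex $v$, so $G-v$ is planar and therefore, by Wagner's theorem, has no $K_5$ minor. If $G$ had a $K_6$ minor, the fact above would give a $K_5$ minor in $G-v$, a contradiction; hence $G$ is $K_6$-minor free.

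Finally, bipartiteness and girth at least $g$ carry over verbatim, since we do not modify the instances produced by \autoref{thm:main} — we merely re-interpret the reduction as one whose outputs belong to the class of bipartite $K_6$-minor free graphs of girth at least $g$. I do not anticipate a genuine obstacle here: the only point needing care is confirming that $K_6$-minor-freeness is a strictly weaker constraint than being apex on the relevant instances, which is exactly the minor-deletion argument above; everything else is bookkeeping about which structural properties of the hard instances are inherited.
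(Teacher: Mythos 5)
Your proposal is correct and matches the paper's approach: the paper derives this corollary from \autoref{thm:main} by the one-line observation that apex graphs are $K_6$-minor free, which is exactly what you prove (via the branch-set deletion argument and Wagner's theorem applied to the planar graph obtained by removing the apex). Your write-up simply supplies the details the paper leaves implicit.
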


Finally, using techniques different from ours, Kratochv{\'\i}l \& Matou{\v{s}}ek~\cite{kratochvil1989} had shown that recognizing $\PureTwoDir$ is $\NP$-hard, and so is the recognition of line segment intersection graphs. \autoref{thm:main} and~\autoref{cor:lineseg} show that these recognition problems remain $\NP$-hard even when the inputs are restricted to bipartite apex graphs of arbitrarily large girth, thereby strengthening their results.

\subsection{Related Work}\label{sec:related}

In this paper, we focus on recognition problems for subclasses of string graphs. A graph is called a string graph if it is the intersection graph of a set of simple curves in the plane. Benzer~\cite{benzer1959} initiated the study of string graphs over half a century ago. Sinden, in his seminal paper~\cite{sinden1966}, asked whether the recognition of string graphs is decidable. Kratochv{\'\i}l~\cite{kratochvil1991} took the first steps towards answering Sinden's question by showing that recognizing string graphs is $\NP$-hard. Schaefer, Sedgwick \& {\v{S}}tefankovi{\v{c}},~\cite{schaefer2003} settled Sinden's question by showing that the recognition of string graphs also lies in $\NP$, and is thus $\NP$-complete.

Finding representations of planar graphs using geometric objects has been a very popular and exciting line of research. 
The celebrated Circle Packing Theorem~\cite{koebe1936} (also see~\cite{andreev1970convex,thurston1982hyperbolic}) states that all planar graphs can be expressed as intersection graphs of touching disks. This result also implies that planar graphs are string graphs. In his PhD thesis, Scheinerman~\cite{scheinerman1984} conjectured that planar graphs can be expressed as intersection graphs of line segments. Scheinerman's conjecture was first shown to be true for bipartite planar graphs~\cite{hartman1991}, and later extended to triangle-free planar graphs~\cite{de1999}. 
Chalopin, Gon{\c{c}}alves \& Ochem~\cite{chalopin2010onestring} proved a relaxed version of Scheinerman's conjecture by showing that every planar graph is in $\OneString$. Scheinerman's conjecture was finally proved in 2009 by Chalopin \& Gon{\c{c}}alves~\cite{chalopin2009scheinerman}. More recently, Gon{\c{c}}alves, Isenmann \& Pennarun~\cite{gonccalves2018Lshapes} showed that planar graphs are intersection graphs of \textsc{L}-shapes. Soon thereafter, Gon{\c{c}}alves, L{\'e}v{\^e}que \& Pinlou~\cite{gonccalves2019homothetic} showed that planar graphs are also
intersection graphs of homothetic triangles.

Another interesting subclass of string graphs is \textsc{$k$-Dir}~\cite{cabello2017refining}. A graph $G$ is in \textsc{$k$-Dir} (and has a \textsc{$k$-Dir} representation) if $G$ is the intersection graph of line segments whose slopes belong to a set of at most $k$ real numbers. Moreover, $G$ is in \textsc{Pure-$k$-Dir} if $G$ has a \textsc{$k$-Dir} representation where no two segments of the same slope intersect. West~\cite{west} conjectured that every planar graph is in \textsc{Pure-$4$-Dir}. A proof of West's conjecture would have provided an alternate proof for the famous Four Color Theorem~\cite{appel1977} for planar graphs. Indeed, every bipartite planar graph is in $\PureTwoDir$ and every $3$-colourable planar graph is in \textsc{Pure-$3$-Dir}~\cite{hartman1991,gonccalves2019-3-dir}. However, Gon{\c{c}}alves~\cite{gonccalves2020not} proved in 2020 that West's conjecture is false.

Motivated by the long history of research on geometric intersection representations of planar graphs, we show in this paper that finding geometric intersection representations of apex graphs is $\NP$-hard. Our proof technique (\autoref{sec:proof-technique}) deviates considerably from earlier $\NP$-hardness proofs for recognition of geometric intersection graphs, all of which were based on different variants of 3-SAT. We elaborate on this below.

In a highly influential paper, Kratochv{\'\i}l~\cite{kratochvil1994} introduced a variant of 3-SAT called \textsc{4-Bounded Planar 3-Connected 3-SAT (4P3C3SAT)}, and showed that it is $\NP$-hard. He then used 4P3C3SAT to show that recognizing $\PureTwoDir$ is $\NP$-hard. Chmel~\cite{chmel2020} reduced the \textsc{4P3C3SAT} problem to show that recognizing intersection graphs of \textsc{L}-shapes is $\NP$-hard. Kratochv{\'\i}l \& Pergel~\cite{kratochvil2007geometric} used 4P3C3SAT to show that recognizing intersection graphs of line segments is $\NP$-hard even if the inputs are restricted to graphs with large girth.  Musta{\c{t}}{\u{a}} \& Pergel~\cite{mustac2019} also used 4P3C3SAT to generalize and strengthen the above result by showing that there is no polynomial time recognizable graph class that lies between $\PureTwoDir$ and $\String$, even when the inputs are restricted to graphs with arbitrarily large girth and maximum degree at most $8$. Chaplick, Jel{\'\i}nek, Kratochv{\'\i}l \& Vysko{\v{c}}il~\cite{chaplick2012} used similar techniques to show that the recognition of intersection graphs of rectilinear curves having at most $k$ bends (for every constant $k$) is $\NP$-hard.

The construction of the variable and clause gadgets used in these reductions is sometimes quite involved, which ends up making the proofs rather complicated. Perhaps the reason behind this is that even though the incidence graph (of clauses and variables) of a 4P3C3SAT instance is planar, the variable and clause gadgets are non-planar, producing graphs that are far from planar. We overcome this difficulty by reducing from the $\NP$-hard $\PHPCa$ $\PHPCb$ problem~\cite{auer2011}, which in turn was inspired by the $\NP$-hard $\PHCC$ problem~\cite{wigderson1982}. Reducing from this problem allows us to use the natural notion of planarity, making our proofs easier to follow. We explain this in the next section (\autoref{sec:proof-technique}).


\begin{figure}
	\centering
	\scalebox{1}{
		\begin{tabular}{ccc}
			\rotatebox{90}{
				\begin{tikzpicture}
				
				\draw[densely dotted] (0,0) -- (7,0);
				\def\A{4,0}
				\def\B{3,0}
				\def\C{2,0}
				\def\D{1,0}
				\def\E{0,0}
				\def\F{5,0}
				\def\G{6,0}
				\def\H{7,0}
				\renewcommand{\vertexset}{(a,\A,blue!60!white,,,blue!60!white),(b,\B,blue!60!white,,,blue!60!white),(c,\C,blue!60!white,,,blue!60!white),(d,\D,blue!60!white,,,blue!60!white),(e,\E,blue!60!white,,,blue!60!white),(f,\F,blue!60!white,,,blue!60!white),(g,\G,blue!60!white,,,blue!60!white),(h,\H,blue!60!white,,,blue!60!white)}
				
				\renewcommand{\edgeset}{(c,d,,,-0.5),(d,e,,,-0.5),(c,e,,,-1),(a,d,,,0.5),(f,d,,,0.8), (g,d,,,1), (h,f,,,-0.5),(h,a,,,-0.8)}
				
				\renewcommand{\defradius}{0.07}
				
				\drawgraph
				
				\draw (1,0) -- (1,-1) -- (-1,-1) -- (-1,1) -- (3,1) -- (3,0);
				
				\draw (6,0) -- (6,-1.5) -- (-1.5,-1.5) -- (-1.5,1.5) -- (4,1.5) -- (4,0);
				
				\node[left] at (-0.1,0) {\rotatebox{270} {\large $v_1$}};
				\node[left] at (0.9,-0.2) {\rotatebox{270} {\large $v_2$}};
				\node[right] at (2,0.2) {\rotatebox{270} {\large $v_3$}};
				\node[right] at (3,0.2) {\rotatebox{270} {\large $v_4$}};
				\node[right] at (4.1,0) {\rotatebox{270} {\large $v_5$}};
				\node[right] at (5.1,0) {\rotatebox{270} {\large $v_6$}};
				\node[right] at (6.1,0) {\rotatebox{270} {\large $v_7$}};
				\node[right] at (7.1,0) {\rotatebox{270} {\large $v_8$}};
				\end{tikzpicture}} & \rotatebox{90}{\begin{tikzpicture}
				
				\draw[densely dotted] (0,0) -- (7,0);
				
				\def\A{4,0}
				\def\B{3,0}
				\def\C{2,0}
				\def\D{1,0}
				\def\E{0,0}
				\def\F{5,0}
				\def\G{6,0}
				\def\H{7,0}
				
				\renewcommand{\vertexset}{(a,\A,blue!60!white,,,blue!60!white),(b,\B,blue!60!white,,,blue!60!white),(c,\C,blue!60!white,,,blue!60!white),(d,\D,blue!60!white,,,blue!60!white),(e,\E,blue!60!white,,,blue!60!white),(f,\F,blue!60!white,,,blue!60!white),(g,\G,blue!60!white,,,blue!60!white),(h,\H,blue!60!white,,,blue!60!white)}
				
				\renewcommand{\edgeset}{(c,d,,,-0.5),(d,e,,,-0.5),(c,e,,,-1),(a,d,,,0.5),(f,d,,,0.8), (g,d,,,1), (h,f,,,-0.5),(h,a,,,-0.8)}
				
				\renewcommand{\defradius}{0.07}
				
				\drawgraph
				
				\draw (1,0) -- (1,-1) -- (-1,-1) -- (-1,1) -- (3,1) -- (3,0);
				
				\draw (6,0) -- (6,-1.5) -- (-1.5,-1.5) -- (-1.5,1.5) -- (4,1.5) -- (4,0);
				
				\filldraw[magenta!70!black] (0.825,0.25) circle (0.05cm);
				\filldraw[magenta!70!black] (0.625,0.37) circle (0.05cm);
				\filldraw[magenta!70!black] (0.4,0.37) circle (0.05cm);
				
				\filldraw[magenta!70!black] (1.15,0.25) circle (0.05cm);
				\filldraw[magenta!70!black] (1.35,0.37) circle (0.05cm);
				\filldraw[magenta!70!black] (1.53,0.4) circle (0.05cm);
				
				\filldraw[magenta!70!black] (0.625,0.7) circle (0.05cm);
				\filldraw[magenta!70!black] (1.35,0.7) circle (0.05cm);
				\filldraw[magenta!70!black] (1,0.75) circle (0.05cm);
				
				\filldraw[magenta!70!black] (1,1) circle (0.05cm);
				\filldraw[magenta!70!black] (-1,0) circle (0.05cm);
				\filldraw[magenta!70!black] (0,-1) circle (0.05cm);
				
				\filldraw[magenta!70!black] (1,1.5) circle (0.05cm);
				\filldraw[magenta!70!black] (-1.5,0) circle (0.05cm);
				\filldraw[magenta!70!black] (2,-1.5) circle (0.05cm);
				
				\filldraw[magenta!70!black] (2.75,-0.375) circle (0.05cm);
				\filldraw[magenta!70!black] (3,-0.35) circle (0.05cm);
				\filldraw[magenta!70!black] (3.25,-0.32) circle (0.05cm);
				
				\filldraw[magenta!70!black] (3.45,-0.575) circle (0.05cm);
				\filldraw[magenta!70!black] (3.7,-0.555) circle (0.05cm);
				\filldraw[magenta!70!black] (3.95,-0.49) circle (0.05cm);
				
				\filldraw[magenta!70!black] (4.15,-0.72) circle (0.05cm);
				\filldraw[magenta!70!black] (4.4,-0.67) circle (0.05cm);
				\filldraw[magenta!70!black] (4.65,-0.62) circle (0.05cm);
				
				\filldraw[magenta!70!black] (5.25,0.22) circle (0.05cm);
				\filldraw[magenta!70!black] (5.5,0.3) circle (0.05cm);
				\filldraw[magenta!70!black] (5.75,0.38) circle (0.05cm);
				
				\filldraw[magenta!70!black] (4.75,0.5) circle (0.05cm);
				\filldraw[magenta!70!black] (5,0.575) circle (0.05cm);
				\filldraw[magenta!70!black] (5.25,0.6) circle (0.05cm);

				\node[left] at (-0.1,0) {\rotatebox{270} {\large $v_1$}};
				\node[left] at (0.9,-0.2) {\rotatebox{270} {\large $v_2$}};
				\node[right] at (2,0.2) {\rotatebox{270} {\large $v_3$}};
				\node[right] at (3,0.2) {\rotatebox{270} {\large $v_4$}};
				\node[right] at (4.1,0) {\rotatebox{270} {\large $v_5$}};
				\node[right] at (5.1,0) {\rotatebox{270} {\large $v_6$}};
				\node[right] at (6.1,0) {\rotatebox{270} {\large $v_7$}};
				\node[right] at (7.1,0) {\rotatebox{270} {\large $v_8$}};
				\end{tikzpicture}} & \rotatebox{90}{\begin{tikzpicture}
				
				\draw[densely dotted] (0,0) -- (7,0);
				
				\def\A{4,0}
				\def\B{3,0}
				\def\C{2,0}
				\def\D{1,0}
				\def\E{0,0}
				\def\F{5,0}
				\def\G{6,0}
				\def\H{7,0}
				\def\AP{3,-3.3}
				
				\renewcommand{\vertexset}{(ap,\AP,magenta!70!black,0.2,,magenta!70!black), (a,\A,blue!60!white,,,blue!60!white),(b,\B,blue!60!white,,,blue!60!white),(c,\C,blue!60!white,,,blue!60!white),(d,\D,blue!60!white,,,blue!60!white),(e,\E,blue!60!white,,,blue!60!white),(f,\F,blue!60!white,,,blue!60!white),(g,\G,blue!60!white,,,blue!60!white),(h,\H,blue!60!white,,,blue!60!white)}
				
				\renewcommand{\edgeset}{(ap,h,gray,,-0.5),(ap,g,gray,,-0.5),(ap,f,gray,,-0.5),(ap,a,gray,,-0.5),(ap,b,gray,,-0.3),(ap,c,gray,,0.35),(ap,d,gray,,0.5),(ap,e,gray,,0.5),(c,d,,,-0.5),(d,e,,,-0.5),(c,e,,,-1),(a,d,,,0.5),(f,d,,,0.8), (g,d,,,1), (h,f,,,-0.5),(h,a,,,-0.8)}
				
				\renewcommand{\defradius}{0.07}
				
				\drawgraph
				
				\draw (1,0) -- (1,-1) -- (-1,-1) -- (-1,1) -- (3,1) -- (3,0);
				
				\draw (6,0) -- (6,-1.5) -- (-1.5,-1.5) -- (-1.5,1.5) -- (4,1.5) -- (4,0);
				
				\filldraw[magenta!70!black] (0.825,0.25) circle (0.05cm);
				\filldraw[magenta!70!black] (0.625,0.37) circle (0.05cm);
				\filldraw[magenta!70!black] (0.4,0.37) circle (0.05cm);
				
				\filldraw[magenta!70!black] (1.15,0.25) circle (0.05cm);
				\filldraw[magenta!70!black] (1.35,0.37) circle (0.05cm);
				\filldraw[magenta!70!black] (1.53,0.4) circle (0.05cm);
				
				\filldraw[magenta!70!black] (0.625,0.7) circle (0.05cm);
				\filldraw[magenta!70!black] (1.35,0.7) circle (0.05cm);
				\filldraw[magenta!70!black] (1,0.75) circle (0.05cm);
				
				\filldraw[magenta!70!black] (1,1) circle (0.05cm);
				\filldraw[magenta!70!black] (-1,0) circle (0.05cm);
				\filldraw[magenta!70!black] (0,-1) circle (0.05cm);
				
				\filldraw[magenta!70!black] (1,1.5) circle (0.05cm);
				\filldraw[magenta!70!black] (-1.5,0) circle (0.05cm);
				\filldraw[magenta!70!black] (2,-1.5) circle (0.05cm);
				
				\filldraw[magenta!70!black] (2.75,-0.375) circle (0.05cm);
				\filldraw[magenta!70!black] (3,-0.35) circle (0.05cm);
				\filldraw[magenta!70!black] (3.25,-0.32) circle (0.05cm);
				
				\filldraw[magenta!70!black] (3.45,-0.575) circle (0.05cm);
				\filldraw[magenta!70!black] (3.7,-0.555) circle (0.05cm);
				\filldraw[magenta!70!black] (3.95,-0.49) circle (0.05cm);
				
				\filldraw[magenta!70!black] (4.15,-0.72) circle (0.05cm);
				\filldraw[magenta!70!black] (4.4,-0.67) circle (0.05cm);
				\filldraw[magenta!70!black] (4.65,-0.62) circle (0.05cm);
				
				\filldraw[magenta!70!black] (5.25,0.22) circle (0.05cm);
				\filldraw[magenta!70!black] (5.5,0.3) circle (0.05cm);
				\filldraw[magenta!70!black] (5.75,0.38) circle (0.05cm);
				
				\filldraw[magenta!70!black] (4.75,0.5) circle (0.05cm);
				\filldraw[magenta!70!black] (5,0.575) circle (0.05cm);
				\filldraw[magenta!70!black] (5.25,0.6) circle (0.05cm);

				\node[left] at (-0.1,0) {\rotatebox{270} {\large $v_1$}};
				\node[left] at (0.9,-0.2) {\rotatebox{270} {\large $v_2$}};
				\node[right] at (2,0.2) {\rotatebox{270} {\large $v_3$}};
				\node[right] at (3,0.2) {\rotatebox{270} {\large $v_4$}};
				\node[right] at (4.1,0) {\rotatebox{270} {\large $v_5$}};
				\node[right] at (5.1,0) {\rotatebox{270} {\large $v_6$}};
				\node[right] at (6.1,0) {\rotatebox{270} {\large $v_7$}};
				\node[right] at (7.1,0) {\rotatebox{270} {\large $v_8$}};
				\node at (2.4,-3.3) {\rotatebox{270} {\LARGE $a$}};
				
				\end{tikzpicture}}\\
			(a) & (b) & (c)\\ & & \\ 
			& \begin{tikzpicture}
			\draw[magenta!70!black, ultra thick] (0,7) -- (0,0);
			
			\draw[blue!60!white] (0,6.5) -- (-1.5,6.5);
			\draw[blue!60!white] (0,5.7) -- (3,5.7);
			\draw[blue!60!white] (-0.7,4.9) -- (1,4.9);
			\draw[blue!60!white] (-3,4.1) -- (0.5,4.1);
			\draw[blue!60!white] (-2,3.3) -- (0,3.3);
			\draw[blue!60!white] (-1.4,2.5) -- (0,2.5);
			\draw[blue!60!white] (-0.7,1.7) -- (2,1.7);
			\draw[blue!60!white] (-1.4,0.9) -- (0,0.9);
			
			
			\draw[magenta!70!black] (2.9,5.8) -- (2.9,-1.1);
			\draw[magenta!70!black] (3,-1) -- (-3,-1);
			\draw[magenta!70!black] (-2.9,-1.1) -- (-2.9,4.2);
			
			\draw[magenta!70!black] (1.9,1.8) -- (1.9,-0.6);
			\draw[magenta!70!black] (2,-0.5) -- (-2,-0.5);
			\draw[magenta!70!black] (-1.9,-0.6) -- (-1.9,3.4);
			
			
			\draw[magenta!70!black] (1.7,5.8) -- (1.7,3.8);
			\draw[magenta!70!black] (1.6,3.9) -- (1.6,1.6);
			\draw[magenta!70!black] (1.5,3.85) -- (1.8,3.85);
			
			\draw[magenta!70!black] (0.9,5) -- (0.9,3.2);
			\draw[magenta!70!black] (0.8,3.3) -- (0.8,1.6);
			\draw[magenta!70!black] (0.7,3.25) -- (1,3.25);
			
			\draw[magenta!70!black] (0.4,4.2) -- (0.4,2.5);
			\draw[magenta!70!black] (0.3,2.6) -- (0.3,1.6);
			\draw[magenta!70!black] (0.2,2.55) -- (0.5,2.55);

			
			\draw[magenta!70!black] (-0.4,6.6) -- (-0.4,5.8);
			\draw[magenta!70!black] (-0.5,5.9) -- (-0.5,4.8);
			\draw[magenta!70!black] (-0.6,5.85) -- (-0.3,5.85);
			
			\draw[magenta!70!black] (-1.3,6.6) -- (-1.3,5.2);
			\draw[magenta!70!black] (-1.4,5.3) -- (-1.4,4);
			\draw[magenta!70!black] (-1.5,5.25) -- (-1.2,5.25);
			
			\draw[magenta!70!black] (-0.4,2.6) -- (-0.4,2);
			\draw[magenta!70!black] (-0.5,2.1) -- (-0.5,1.6);
			\draw[magenta!70!black] (-0.6,2.05) -- (-0.3,2.05);
			
			\draw[magenta!70!black] (-0.4,1.8) -- (-0.4,1.3);
			\draw[magenta!70!black] (-0.5,1.4) -- (-0.5,0.8);
			\draw[magenta!70!black] (-0.6,1.35) -- (-0.3,1.35);
			
			\draw[magenta!70!black] (-1.2,2.6) -- (-1.2,1.6);
			\draw[magenta!70!black] (-1.3,1.7) -- (-1.3,0.8);
			\draw[magenta!70!black] (-1.1,1.65) -- (-1.4,1.65);
			
			\node[above,blue!60!white] at (-1,6.5) {\large $\curve{v_8}$};
			\node[above, blue!60!white] at (2,5.7) {\large $\curve{v_7}$};
			\node[above,blue!60!white] at (1,4.9) {\large $\curve{v_6}$};
			\node[above, blue!60!white] at (-2.5,4.1) {\large $\curve{v_5}$};
			\node[above, blue!60!white] at (-1.5,3.3) {\large $\curve{v_4}$};
			\node[above, blue!60!white] at (-1,2.5) {\large $\curve{v_3}$};
			\node[below, blue!60!white] at (1,1.7) {\large $\curve{v_2}$};
			\node[below, blue!60!white] at (-1,0.95) {\large $\curve{v_1}$};
			
			\node[above,magenta!70!black] at (0,7) {\LARGE $\curve{a}$};
			\end{tikzpicture} & \\
			& (d) & 
	\end{tabular}}
	\caption{(a) A front line drawing (\autoref{def:frontlinedrawing}) of a planar graph $G$, where $G$ is a yes-instance of $\PHPCa$ $\PHPCb$ (\autoref{def:planhampathcomp}), and the vertical dotted line denotes the front line $\ellfront$; (b) $\subdivide{G}$ for $k=3$; (c) $\apexgraph{G}$; (d) $C$, a $\PureTwoDir$ representation of $\apexgraph{G}$ (\autoref{subsec:4apex2dir}).}\label{fig:representation}
\end{figure}

\section{Proof Techniques}\label{sec:proof-technique}

Let us now describe the $\PHPC$ problem~\cite{auer2011}. A Hamiltonian path in a graph is a path that visits each vertex of the graph exactly once.

\begin{Definition}\label{def:planhampathcomp}
	$\PHPC$ is the following decision problem.
	
	\noindent \textbf{Input:} A planar graph $G$.
	
	\noindent \textbf{Output:} Yes, if $G$ is a subgraph of a planar graph with a Hamiltonian path; no, otherwise.
\end{Definition}

Auer \& Glei{\ss}ner~\cite{auer2011} showed that this problem (\autoref{def:planhampathcomp}) is $\NP$-hard. For their proof, they defined the following constrained planar embedding (\autoref{fig:representation} (a)).

\begin{Definition}[Front line drawing] \label{def:frontlinedrawing} A front line drawing of a planar graph $G$ is a planar embedding of $G$ such that its vertices lie on a vertical line segment $\ellfront$ (called the~\emph{front line} by Auer \& Glei{\ss}ner~\cite{auer2011}) and its edges do not cross each other or $\ellfront$. Let $e=v_iv_j$ be an edge of $G$ such that $e$ connects to the vertex $v_i$ from the left of $\ellfront$, and to the vertex $v_j$ from the right of $\ellfront$. We call such an edge a~\emph{crossover edge}. \autoref{fig:representation} (a) is a front line drawing with two crossover edges, namely $v_5v_7$ and $v_4v_2$.
\end{Definition}

\begin{Observation}[Auer \& Glei{\ss}ner~\cite{auer2011}] \label{obs:frontlinedrawing} A planar graph $G$ is a yes-instance of the $\PHPCa$ $\PHPCb$ problem if and only if $G$ admits a front line drawing.
\end{Observation}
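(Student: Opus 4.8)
The plan is to prove the two directions of the equivalence directly from the definitions, the only non-formal ingredient being the elementary topological fact that a simple (say, polygonal) arc in the plane can be carried onto a straight segment by a homeomorphism of the plane.

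For the ``if'' direction, suppose $G$ has a front line drawing with its vertices appearing on $\ellfront$ in the order $v_1, v_2, \dots, v_n$ from top to bottom. I would let $H$ be obtained from $G$ by adding, for each $i$, the edge $v_iv_{i+1}$ (keeping only one copy of any edge already present). Then $H$ contains the Hamiltonian path $v_1v_2\cdots v_n$ and $G\subseteq H$, so it suffices to see that $H$ is planar. For this, start from the given front line drawing of $G$ and draw each newly added edge $v_iv_{i+1}$ as the straight sub-segment of $\ellfront$ between $v_i$ and $v_{i+1}$: since the edges of $G$ meet $\ellfront$ only at their endpoints and the added sub-segments are pairwise interior-disjoint, this is a planar drawing of $H$. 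Hence $G$ is a yes-instance of $\PHPC$.

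For the ``only if'' direction, suppose $G$ is a subgraph of a planar graph $H$ with a Hamiltonian path $P=w_1w_2\cdots w_m$, so $V(H)=\{w_1,\dots,w_m\}\supseteq V(G)$. Let $F$ be the spanning subgraph of $H$ with edge set $E(G)\cup E(P)$; since $F\subseteq H$, $F$ is planar, and $F$ contains both $G$ and the path $P$. Fix a planar embedding of $F$. In it $P$ is drawn as a simple arc $\alpha$ passing through $w_1,\dots,w_m$ in this order, and by the Schoenflies theorem (equivalently, because any tame arc in the plane is ambient isotopic to a segment) there is a homeomorphism $h$ of the plane taking $\alpha$ onto a vertical segment $\ellfront$ and $w_1,\dots,w_m$ onto points of $\ellfront$ in order. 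Applying $h$ to the drawing of $F$ and then deleting all vertices and edges not in $G$ gives a planar drawing of $G$ whose vertices lie on $\ellfront$ in the induced order; no edge of $G$ crosses $\ellfront$, because $h$ creates no crossings and the only edges of $F$ lying along $\alpha$ (hence along $\ellfront$) are edges of $P$. The sole cosmetic point is that an edge of $G$ that also belongs to $P$ would be drawn along $\ellfront$; each of these finitely many edges can be pushed slightly off $\ellfront$ to one side without creating a crossing. This produces a front line drawing of $G$.

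The one step that requires care — and the one I would flag as the ``hard'' part, though it is standard — is the arc-straightening in the second direction. It is where the crossover edges of Auer \& Glei{\ss}ner's definition arise naturally: an edge of $F\setminus P$ drawn on the ``far'' side of $\alpha$ becomes, after straightening, an edge that winds around an endpoint of $\ellfront$. Everything else is immediate: the construction of $H$ in the first direction and the reduction to the planar spanning subgraph $F=G\cup P$ in the second involve no combinatorial difficulty, and in particular no assumption such as $V(H)=V(G)$ is needed, since the extra vertices of $H$ simply sit on $\ellfront$ and are erased at the end.
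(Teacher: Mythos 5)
Your proof is correct. Note that the paper does not prove this observation at all --- it is quoted as a result of Auer \& Glei{\ss}ner~\cite{auer2011} --- so there is no in-paper argument to compare against; what you give is the standard argument one would expect (add the consecutive sub-segments of $\ellfront$ as edges in one direction; straighten the Hamiltonian path of a planar embedding of $G\cup P$ into $\ellfront$ in the other), and both directions are sound. Your handling of the two minor points --- extra vertices of the supergraph being erased at the end, and edges of $G$ that also lie on $P$ being nudged off $\ellfront$ --- is exactly what is needed, and the arc-straightening step is unproblematic since one may take the embedding of $G\cup P$ to be polygonal (or even straight-line, by F\'ary's theorem), for which the ambient homeomorphism is elementary.
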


Using this observation, they showed that deciding whether a graph $G$ admits a front line drawing is $\NP$-hard, implying the required theorem.

\begin{Theorem}[Auer \& Glei{\ss}ner~\cite{auer2011}] \label{thm:auer}
	$\PHPC$ is $\NP$-hard.
\end{Theorem}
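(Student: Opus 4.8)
The plan is to prove the equivalent statement that it is $\NP$-hard to decide whether a given planar graph admits a front line drawing (\autoref{def:frontlinedrawing}); by \autoref{obs:frontlinedrawing} this is literally the same decision problem as $\PHPC$. The reduction I would use is from $\PHCC$, which is $\NP$-hard by Wigderson~\cite{wigderson1982}. As a first step I would record the classical reformulation that a planar graph $H$ is a yes-instance of $\PHCC$ (i.e.\ is a subgraph of a planar graph with a Hamiltonian cycle) if and only if $H$ admits a $2$-page book embedding: the spine order together with the spine arcs supplies the completing Hamiltonian cycle, and the two pages witness planarity. This matches the left/right structure of a front line drawing, in which ``left of $\ellfront$'' and ``right of $\ellfront$'' play the roles of the two pages; the only extra freedom a front line drawing has over a $2$-page book embedding is that \emph{crossover} edges may wrap around the top or the bottom endpoint of the segment $\ellfront$.

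Given an instance $H$ of $\PHCC$, I would build in polynomial time a planar graph $G = G(H)$ by taking $H$ and attaching, at two designated ``boundary'' vertices, small rigidly triangulated planar blocks (together, if needed, with a global frame subgraph) whose purpose is to force, in \emph{every} front line drawing of $G$, these blocks to occupy the two ends of $\ellfront$ and to leave no face through which a crossover edge incident to a vertex of $H$ could pass. The two directions are then: (forward) from a $2$-page book embedding of $H$, place the vertices of $H$ on $\ellfront$ in spine order, add the boundary-block vertices at the two extremities of $\ellfront$, route page-one edges on the left and page-two edges on the right, and draw the (planar) boundary blocks within the regions near the ends of $\ellfront$, obtaining a front line drawing of $G$; (backward) from a front line drawing of $G$, use the rigidity of the boundary blocks to conclude that no crossover edge of $G$ separates two vertices of $H$ along $\ellfront$, so that the restriction of the drawing to $H$ is an honest $2$-page layout, giving a $2$-page book embedding of $H$ and hence $H \in \PHCC$. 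Since the construction preserves planarity and runs in polynomial time, this yields \autoref{thm:auer}.

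The main obstacle — and the reason a direct appeal to~\cite{wigderson1982} does not suffice — is the backward direction, namely the possibility of crossover edges wrapping around the two extremities of $\ellfront$: a front line drawing that exploits such wrap-arounds need not restrict to a $2$-page book embedding of $H$, and a priori need not encode any Hamiltonian structure at all. The entire difficulty is to design the boundary blocks (and to verify that they are themselves always drawable inside any front line drawing of $G$) so as to pin the drawing down to the intended cylindrical form and thereby forbid every wrap-around that could cross into $H$. Checking that these blocks simultaneously (i) admit the intended drawing in the forward direction and (ii) genuinely block all harmful crossover edges in the backward direction is the delicate part of the argument; everything else is routine bookkeeping on planarity and on the size of $G(H)$.
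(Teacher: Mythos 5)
The paper does not prove this statement at all: \autoref{thm:auer} is imported as a black box from Auer \& Glei{\ss}ner~\cite{auer2011}, and the surrounding text only records that their argument goes through the front-line-drawing characterization (\autoref{obs:frontlinedrawing}). So you are not deviating from ``the paper's proof'' --- there is none to deviate from --- and your high-level plan (reduce from \PHCC, equivalently $2$-page book embeddability, which is NP-hard via Wigderson~\cite{wigderson1982} and the Chung--Leighton--Rosenberg observation on maximal planar graphs, and view a front line drawing as a $2$-page layout with the extra freedom of wrap-around crossover edges) is the right shape and consistent with how the cited result is actually obtained.

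However, as written your proposal has a genuine gap, and you name it yourself: the entire content of the reduction is the construction of the ``boundary blocks'' that must (i) be attachable without destroying yes-instances and (ii) force, in \emph{every} front line drawing of $G(H)$, that no crossover edge incident to a vertex of $H$ wraps around an end of $\ellfront$. You never specify these blocks, never say how they are attached to $H$, and never argue why a front line drawing of $G(H)$ must place them at the extremities of $\ellfront$ rather than interleaved with $H$'s vertices. Forcing a prescribed set of vertices to occupy the ends of the spine order in every valid drawing is exactly the kind of rigidity claim that fails for naive gadgets (a front line drawing may reorder vertices arbitrarily, and a ``rigidly triangulated'' block constrains its own internal layout but not its position relative to $H$). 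Until the gadget is exhibited and both directions are verified against it, this is a proof strategy rather than a proof; the statement should either remain a citation to~\cite{auer2011} or be accompanied by the explicit construction.
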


We will use~\autoref{thm:auer} to show our main result (\autoref{thm:main}). We show $\NP$-hardness for graph classes ``sandwiched'' between two classes of geometric intersection graphs, similar to Musta{\c{t}}{\u{a}} \& Pergel~\cite{mustac2019}. A more technical formulation of our main result is as follows.


\begin{Theorem}\label{thm:mainrestat}
	For every planar graph $G$ and positive integer $g$, there exists a bipartite apex graph $\apexgraph{G}$ of girth at least $g$ which can be obtained in polynomial time from $G$, satisfying the following properties.
	\begin{enumerate}
		\item[(a)] If $\apexgraph{G}$ is in $\OneString$, then $G$ is a yes-instance of $\PHPCa$ $\PHPCb$.
		\item[(b)] If $G$ is a yes-instance of $\PHPC$, then $\apexgraph{G}$ is in $\PureTwoDir$. 
	\end{enumerate}
\end{Theorem}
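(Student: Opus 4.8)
My plan is to build $\apexgraph{G}$ by subdividing $G$ heavily and then placing a single apex vertex over the original vertices, and to make $\OneString$- and $\PureTwoDir$-representability of $\apexgraph{G}$ equivalent, via a topological ``unfolding'', to the existence of a front line drawing of $G$ (\autoref{obs:frontlinedrawing}). Concretely, fix $k$ to be the smallest odd integer that is at least $\max\{g,\,p(|V(G)|)\}$ for a fixed polynomial $p$ (the polynomial bound is only needed in part (b)). Let $\subdivide{G}$ be obtained from $G$ by replacing every edge with a path having exactly $k$ internal vertices, and let $\apexgraph{G}$ be $\subdivide{G}$ together with a new vertex $a$ adjacent to exactly the vertices of $V(G)$. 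Then $\apexgraph{G}$ is computable in polynomial time; deleting $a$ leaves a subdivision of the planar graph $G$, so $\apexgraph{G}$ is apex; putting $\{a\}$ and the odd-indexed subdivision vertices on one side and $V(G)$ and the even-indexed subdivision vertices on the other is a proper $2$-colouring (here we use that $k$ is odd and that no two vertices of $V(G)$ are adjacent in $\subdivide{G}$), so $\apexgraph{G}$ is bipartite; and every cycle of $\apexgraph{G}$ either lies in $\subdivide{G}$ (hence has length at least $3(k+1)$) or uses two edges $av_i,av_j$ together with a $v_i$--$v_j$ path in $\subdivide{G}$ (hence has length at least $k+3$), so $\apexgraph{G}$ has girth at least $k+3\ge g$. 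It remains to establish (a) and (b).

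\textbf{Part (b): front line drawing $\Rightarrow$ $\PureTwoDir$.} Assume $G$ is a yes-instance of $\PHPC$, so by \autoref{obs:frontlinedrawing} it has a front line drawing with front line $\ellfront$ and vertices $v_1,\dots,v_n$ in order along $\ellfront$; we may normalise this drawing to be polygonal of polynomial complexity. Take $\curve{a}$ to be the segment $\ellfront$ (extended slightly beyond its endpoints), and take $\curve{v_i}$ to be a short horizontal segment crossing $\curve{a}$ at the location of $v_i$; since the $v_i$ sit at distinct heights, these horizontal segments are pairwise disjoint. For each edge $e=v_iv_j$ of $G$, replace its drawn curve by a thin axis-parallel staircase running alongside it: an alternating sequence of vertical and horizontal segments, starting with a vertical segment that crosses $\curve{v_i}$ and ending with a vertical segment that crosses $\curve{v_j}$. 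Because the curves of $G$ are pairwise disjoint and avoid $\ellfront$ (crossover edges wrapping around an endpoint of $\ellfront$), these staircases are pairwise disjoint, avoid $\curve{a}$, and meet $\curve{v_i}$ and $\curve{v_j}$ once each. Such a staircase necessarily has an odd number of segments (it alternates, starting and ending vertical), and by inserting small ``hairpin'' detours (each adding two segments and no new crossings) we can make it have exactly $k$ segments, which is possible because $k$ is odd and at least a fixed polynomial in $|V(G)|$. Reading the $k$ segments of the staircase of $e$ as the curves of the $k$ subdivision vertices of $e$ gives a $\PureTwoDir$ representation of $\apexgraph{G}$.

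\textbf{Part (a): $\OneString$ $\Rightarrow$ front line drawing.} Assume $\apexgraph{G}\in\OneString$ with curve $\curve{x}$ for each vertex $x$; we may assume the curves are polygonal. Since $a$ is adjacent to precisely $v_1,\dots,v_n$, the arc $\curve{a}$ is crossed exactly once by each $\curve{v_i}$ and is disjoint from all subdivision curves; moreover the crossing points $p_1,\dots,p_n$ are pairwise distinct, since a shared crossing would force $v_i\sim v_j$, which is impossible in $\subdivide{G}$. View the representation on the sphere $S^2$ and set $\Omega:=S^2\setminus\curve{a}$, which is an open disk; its boundary is a ``doubled'' copy of the arc $\curve{a}$, i.e.\ it consists of two arcs $\alpha_L,\alpha_R$, each a homeomorphic copy of $\curve{a}$, carrying matched marked points $q^L_i$ and $q^R_i$ (the two sides of $p_i$), so that $\partial\Omega$ is cyclically ordered $q^L_1,\dots,q^L_n,q^R_n,\dots,q^R_1$. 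All subdivision curves lie in $\Omega$, and $\curve{v_i}$ meets $\Omega$ in two sub-arcs, one limiting onto $q^L_i$ and one onto $q^R_i$. For each edge $e=v_iv_j$ the $k$ subdivision curves of $e$ form a chain (consecutive ones cross once, non-consecutive ones are disjoint, the chain meets $\curve{v_i}$ and $\curve{v_j}$ in one point each and is disjoint from every other curve), so inside the chain we can choose a simple arc from a point of $\curve{v_i}$ to a point of $\curve{v_j}$ that avoids all other curves. Contracting each of the two sub-arcs of each $\curve{v_i}$ to its boundary point, and each chain to its chosen arc, produces a planar embedding in the disk $\Omega$ of the graph obtained from $G$ by splitting every vertex $v_i$ into $q^L_i$ and $q^R_i$ (with each edge of $G$ joining appropriate copies), in which all $2n$ copies lie on $\partial\Omega$ in the above order. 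Identifying $\curve{a}$ with $\ellfront$ and $q^L_i=q^R_i=v_i$, and reading an edge between an $L$-copy and an $R$-copy as a crossover edge, this is precisely a front line drawing of $G$; hence, by \autoref{obs:frontlinedrawing}, $G$ is a yes-instance of $\PHPC$. Finally, (a) and (b) together with \autoref{thm:auer} give \autoref{thm:main}: $\apexgraph{G}$ lies in $\cG$ iff $G$ is a yes-instance of $\PHPC$.

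\textbf{Main obstacle.} I expect the crux to be part (a). The awkward point is that $\curve{a}$ is an arc, not a closed curve, so on its own it does not separate the plane; one must work with the disk $\Omega=S^2\setminus\curve{a}$, whose boundary is a doubled copy of $\curve{a}$, argue that the crossing points $p_i$ induce consistent marked positions $q^L_i,q^R_i$ on the two boundary copies, and then check carefully that contracting the subdivision chains and the two halves of each $\curve{v_i}$ yields exactly the combinatorial structure of a front line drawing (\autoref{def:frontlinedrawing}), crossover edges included, rather than some superficially similar but inequivalent structure. Part (b) is comparatively routine, the only real care being the hairpin padding that forces each staircase to have exactly $k$ segments and the verification that the resulting representation has no unintended intersections (in particular, no two parallel segments meet).
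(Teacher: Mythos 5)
Your construction of $\apexgraph{G}$ is exactly the paper's (full odd $k$-subdivision plus an apex over the original vertices), and your bipartiteness/girth/apex arguments match. The two directions, however, are executed differently. For part (a), the paper does not pass through front line drawings at all: it directly builds a planar supergraph $\planar{G}$ of $G$ with a Hamiltonian path, taking the blue curves $\curve{v_i}$ as vertices, the substrings of $\curve{a}$ between consecutive crossing points $p_i,p_{i+1}$ as the Hamiltonian path edges, and the concatenated subdivision chains as the edges of $G$, then invokes a separate lemma that such a ``planarizable representation'' (vertices as disjoint strings) certifies planarity. Your route instead cuts the sphere along $\curve{a}$, reads off the cyclic order $q^L_1,\dots,q^L_n,q^R_n,\dots,q^R_1$ on the boundary of the resulting disk, and recovers a front line drawing, then applies \autoref{obs:frontlinedrawing}. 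Both are sound; yours needs the Jordan-arc/Schoenflies machinery and the careful boundary identification you flag yourself, while the paper's avoids that topology at the cost of proving the planarizable-representation lemma (which is your ``contract each half of $\curve{v_i}$'' step, isolated and proved once).

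For part (b) there is a step you should not wave through: ``we may normalise this drawing to be polygonal of polynomial complexity.'' A front line drawing is a constrained embedding (vertices on a line, edges avoiding it, with a nesting structure), and producing a polygonal realisation of it with polynomially many pieces and polynomially bounded resolution is precisely the quantitative content that your staircase-plus-hairpin argument then consumes; asserting it without proof means the hard part of (b) is assumed rather than established. The paper supplies exactly this missing piece explicitly: it defines a rank function on the nesting poset of edges, places each $\ell_e$ at x-coordinate $\pm(\rank(e)+\zeta(e)/n^4)$ with an $\varepsilon_k$ perturbation, lets each $\curve{v_i}$ extend exactly to the maximum rank of its incident edges on each side, and verifies all non-adjacencies case by case; this handles in particular the point your sketch glosses over, namely that $\curve{v_i}$ must reach the first vertical piece of every incident edge's path while missing the pieces of non-incident edges nested around $v_i$. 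A side effect is that the paper gets by with any odd $k\ge 3$ (three pieces per edge, then an inductive hairpin insertion to grow $k$ by $2$), so it never needs your extra hypothesis $k\ge p(|V(G)|)$ --- which is harmless for the theorem as stated but is a symptom of the unproved normalisation step. If you either prove the polynomial-complexity normalisation or replace it with an explicit coordinate scheme in the spirit of the paper's rank construction, your part (b) goes through.
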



\begin{proof}[Proof of~\autoref{thm:main} assuming~\autoref{thm:mainrestat}] Let $\cG$ be a graph class satisfying the condition $\PureTwoDir\subseteq\cG\subseteq\OneString$, and let $G$ be a planar graph. If $G$ is a yes-instance of $\PHPCa$ $\PHPCb$, then by~\autoref{thm:mainrestat} (b), $\apexgraph{G}\in\PureTwoDir\subseteq\cG$. And if $\apexgraph{G}\in\cG\subseteq\OneString$, then by~\autoref{thm:mainrestat} (a), $G$ is a yes-instance of $\PHPCa$ $\PHPCb$.
	
Thus, $\apexgraph{G}\in\cG$ if and only if $G$ is a yes-instance of $\PHPCa$ $\PHPCb$. Since $\PHPCa$ $\PHPCb$ is $\NP$-hard (\autoref{thm:auer}) and $\apexgraph{G}$ can be obtained in polynomial time from $G$, this implies that deciding whether the bipartite apex graph $\apexgraph{G}$ belongs to $\cG$ is $\NP$-hard.
\end{proof}

Therefore, as~\autoref{thm:mainrestat} implies our main result (\autoref{thm:main}), the rest of this paper is devoted to the proof of~\autoref{thm:mainrestat}.

\section{Proof of the Main Result}\label{sec:proof}


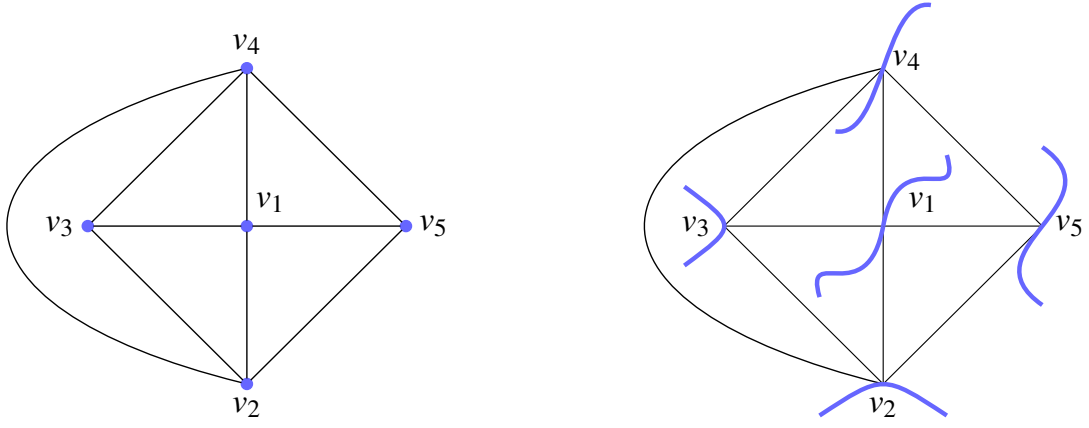
\begin{figure}[ht]
	\vspace{-1.5em}
	\begin{centering}
		\resizebox{1\textwidth}{!}{
			\begin{tikzpicture}
			
			\def\A{4,2}
			\def\B{2,0}
			\def\C{2,4}
			\def\D{2,2}
			\def\E{0,2}
			\renewcommand{\vertexset}{(a,\A,blue!60!white,,,blue!60!white),(b,\B,blue!60!white,,,blue!60!white),(c,\C,blue!60!white,,,blue!60!white),(d,\D,blue!60!white,,,blue!60!white),(e,\E,blue!60!white,,,blue!60!white)}
			\renewcommand{\edgeset}{
				(a,b),(a,d),(a,c),
				(b,c,,,4),(b,d),(b,e),
				(c,d),(c,e),
				(d,e)}
			\renewcommand{\defradius}{0.07}
			\drawgraph
			\node at (4.35,2) {\large $v_5$};
			\node at (2,-0.3) {\large $v_2$};
			\node at (2,4.3) {\large $v_4$};
			\node at (2.3,2.3) {\large $v_1$};
			\node at (-0.35,2) {\large $v_3$};
			
			
			\def\B{10,0}
			\def\C{10,4}
			\renewcommand{\vertexset}{(b,\B,white,,,white),(c,\C,white,,,white)}
			\renewcommand{\edgeset}{(b,c,,,4)}
			\renewcommand{\defradius}{0.01}
			\drawgraph
			
			\draw (12,2) -- (10,4) -- (8,2) -- (12,2) -- (10,0) -- (8,2);
			\draw (10,0) -- (10,4);
			
			\draw [ultra thick, blue!60!white] (12,1) .. controls (11,1.7) and (13,2.3) .. (12,3); 
			\draw [ultra thick, blue!60!white] (9.4,3.2) .. controls (10,3.1) and (10,4.9) .. (10.6,4.8); 
			\draw [ultra thick, blue!60!white] (10.8,2.9) .. controls (11,2.2) and (10.2,3.1) .. (10,2); 
			\draw [ultra thick, blue!60!white] (10,2) .. controls (9.8,0.9) and (9,1.8) .. (9.2,1.1); 
			\draw [ultra thick, blue!60!white] (9.2,-0.4) .. controls (10,0.13) .. (10.8,-0.4); 
			\draw [ultra thick, blue!60!white] (7.5,2.5) .. controls (8.17,2) .. (7.5,1.5); 
			
			\node at (12.35,2) {\large $v_5$};
			\node at (10,-0.3) {\large $v_2$};
			\node at (10.3,4.1) {\large $v_4$};
			\node at (10.5,2.3) {\large $v_1$};
			\node at (7.65,2) {\large $v_3$};
			
			\end{tikzpicture}
		}
		\caption{(Left) A standard representation of a planar graph, where the vertices are points and the edges are strings. (Right) A planarizable representation (\autoref{def:planar}) of the same graph, where the vertices as well as the edges are strings.} \label{fig:planargraphdeffig}
	\end{centering}
\end{figure}


\subsection{Construction of the Apex Graph}\label{sec:construct}

We begin our proof of~\autoref{thm:mainrestat} by describing the construction of $\apexgraph{G}$. Let $G$ be a planar graph. $\apexgraph{G}$ is constructed in two steps. 
\begin{equation*}
G\rightarrow \subdivide{G}\rightarrow \apexgraph{G}.
\end{equation*}
Let $g\geq 6$ be a positive integer, and $k\geq 3$ be the minimum odd integer greater or equal to $g-3$. Let $\subdivide{G}$ be the full $k$-subdivision of $G$, \ie, $\subdivide{G}$ is the graph obtained by replacing each edge of $G$ by a path with $k+1$ edges. \autoref{fig:representation} (a) denotes a graph $G$, and~\autoref{fig:representation} (b) denotes the full $3$-subdivision of $G$. Formally, we replace each $e=(x,y)\in E(G)$ by the path $(x, u^1_e, u^2_e, u^3_e, \ldots, u^k_e, y)$.
\begin{align*}
V(\subdivide{G}) &= V(G) \cup \{u^1_e, u^2_e, u^3_e, \ldots, u^k_e\mid e\in E(G)\}; \\
E(\subdivide{G}) &= \{x u^1_e, u^1_e u^2_e, u^2_e u^3_e,\ldots, u^{k-1}_e u^k_e, u^k_e y \mid e=xy\in E(G)\}.
\end{align*}
We call the vertices of $V(G)\subseteq V(\subdivide{G})$ as the~\emph{original vertices} of $\subdivide{G}$ and the remaining vertices as the ~\emph{subdivision vertices} of $\subdivide{G}$.
Finally, we construct $\apexgraph{G}$ by adding a new vertex $a$ to $\subdivide{G}$ and making it adjacent to all the original vertices of $\subdivide{G}$ (\autoref{fig:representation} (c)). Formally, $\apexgraph{G}$ is defined as follows.
\begin{align*}
V(\apexgraph{G}) &= V(\subdivide{G}) \cup \{a\}; \\
E(\apexgraph{G}) &= E(\subdivide{G}) \cup \{a v \mid v\in V(G)\}.
\end{align*}

\begin{Observation}
	If $G$ is a planar graph, then $\apexgraph{G}$ is a bipartite apex graph of girth at least $g$.
\end{Observation}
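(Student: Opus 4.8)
The plan is to verify the three asserted properties of $\apexgraph{G}$ in turn: bipartiteness, the apex property, and girth at least $g$. Throughout I assume $G$ is a simple planar graph, $g \geq 6$, and $k \geq 3$ is the minimum odd integer with $k \geq g-3$, exactly as fixed in the construction.

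\textbf{Apex property.} This is immediate from the construction: by definition $\apexgraph{G} - a = \subdivide{G}$, and $\subdivide{G}$ is obtained from the planar graph $G$ by subdividing edges, which preserves planarity. Hence removing the single vertex $a$ makes $\apexgraph{G}$ planar, so $\apexgraph{G}$ is an apex graph.

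\textbf{Bipartiteness.} I would exhibit an explicit $2$-colouring. Put all original vertices of $\subdivide{G}$ together with $a$ into one colour class $A$, and distribute the subdivision vertices $u^1_e,\dots,u^k_e$ of each edge $e=xy$ alternately; the point is that the path $(x,u^1_e,\dots,u^k_e,y)$ has $k+1$ edges with $k$ odd, so $k+1$ is even and the path is properly $2$-coloured with both endpoints $x,y$ in the same class $A$. One checks the two edge types: subdivision-path edges are handled by this alternation, and the edges $av$ (for $v \in V(G)$) go between $a \in A$ and $v \in A$ --- wait, that would be monochromatic, so instead I put $a$ in the opposite class $B$ from the original vertices. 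Concretely: original vertices of $\subdivide{G}$ in class $A$, the apex $a$ in class $B$, and on each subdivided path the vertices $u^1_e,u^3_e,\dots$ (odd superscript) in $B$ and $u^2_e,u^4_e,\dots$ (even superscript) in $A$; since $k$ is odd, $u^k_e \in B$ and its neighbour $y \in A$, consistent. All edges then cross between $A$ and $B$, so $\apexgraph{G}$ is bipartite. (Equivalently: $\apexgraph{G}$ has no odd cycle, which also follows from the girth computation below.)

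\textbf{Girth.} I would argue that every cycle in $\apexgraph{G}$ has length at least $g$. A cycle $C$ either avoids $a$ or uses $a$. If $C$ avoids $a$, it lies in $\subdivide{G}$ and corresponds to a closed walk in $G$; if that walk uses $m \geq 1$ edges of $G$ then $C$ has length $m(k+1)$, and since a cycle needs $m \geq 2$ (as $G$ is simple, there is no loop) wait --- actually a cycle in $G$ needs $m\ge 3$ only if $G$ is simple and we want a genuine cycle, but a closed walk realizing a cycle of $\subdivide{G}$ that uses each $G$-edge at most once corresponds to a cycle of $G$, hence $m \geq 3$; but even $m \geq 1$ combined with the fact that the shortest such cycle has length $\geq 2(k+1)$ when $m\ge 2$ suffices. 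The cleanest bound: any cycle of $\subdivide{G}$ has length at least $2(k+1) \geq 2(g-3+1) = 2g-4 \geq g$ since $g \geq 6$; here I use that a cycle must traverse at least two full subdivided paths (it cannot re-use a single subdivided path as both "out" and "back" since internal subdivision vertices have degree $2$). If $C$ uses $a$, then $C$ leaves $a$ to some original vertex $x$, travels within $\subdivide{G}$, and returns to $a$ via another original vertex $y \neq x$; the portion inside $\subdivide{G}$ is an $x$--$y$ path that must traverse at least one full subdivided path, contributing $\geq k+1$ edges, plus the two edges $ax, ya$, giving length $\geq k+3 \geq g$. Combining the two cases, the girth of $\apexgraph{G}$ is at least $g$.

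\textbf{Main obstacle.} The only non-routine point is the girth argument, specifically justifying that a cycle in $\subdivide{G}$ must contain at least two complete subdivided paths (hence length $\geq 2(k+1)$), and that a cycle through $a$ must contain at least one complete subdivided path between its two original-vertex neighbours. Both follow from the observation that every internal subdivision vertex has degree exactly $2$, so once a cycle enters a subdivided path it must exit at the other original endpoint; I would state this as a short lemma and use it uniformly. The bipartiteness and apex claims are essentially by inspection.
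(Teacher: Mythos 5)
Your proof is correct and essentially matches the paper's: the same bipartition (original vertices and even-index subdivision vertices versus $a$ and odd-index subdivision vertices), the same planarity-of-$\subdivide{G}$ argument for the apex property, and the same core girth idea that every cycle must traverse a full subdivided path of $k+1$ edges plus at least one more edge. The paper merely states the girth bound in one unified step (every cycle contains the $k+2$ vertices of some subdivided path plus one more vertex, giving girth $\geq k+3 \geq g$) rather than splitting into cycles through and avoiding $a$ as you do.
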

\begin{proof} $G$ is a planar graph and subdivision does not affect planarity, so $\subdivide{G}$ is also planar, implying that $\apexgraph{G}$ is an apex graph. The vertex set of $\apexgraph{G}$ can be expressed as the disjoint union of two sets $A$ and $B$, where
	\begin{align*}
	A &= \{x\mid x\in V(G) \}\cup\{u^i_e\mid e\in E(G), i \text{ is even}\};\\
	B &= \{a\} \cup \{u^i_e\mid e\in E(G), i \text{ is odd}\}.
	\end{align*}
	Note that $A$ induces an independent set in $\apexgraph{G}$, and so does $B$. Thus, $\apexgraph{G}$ is a bipartite apex graph. As for the girth, note that every cycle in $\apexgraph{G}$ contains at least $k+2$ vertices $x, u^1_e, u^2_e, u^3_e, \ldots, u^k_e, y$, for some $e=(x,y)\in E(G)$. At least one more vertex is needed to complete the cycle, implying that the girth of $\apexgraph{G}$ is at least $k+3\geq g$.
\end{proof}

It is easy to see that this entire construction of $\apexgraph{G}$ from $G$ can be carried out in polynomial time. In~\autoref{subsec:1string4apex} and~\autoref{subsec:4apex2dir}, we will prove~\autoref{thm:mainrestat} (a) and~\autoref{thm:mainrestat} (b), respectively.

\subsection{Proof of Theorem~\ref{thm:mainrestat} (a)}\label{subsec:1string4apex}

In this section, we will show that if $\apexgraph{G}$ is in $\OneString$, then $G$ is a yes-instance of $\PHPCa$ $\PHPCb$. In other words, if $\apexgraph{G}$ has a $\OneString$ representation, then $G$ is a subgraph of a planar graph with a Hamiltonian path.


In our proofs, we will demonstrate the planarity of our graphs by embedding them in the plane. Typically, a planar graph is defined as a graph whose vertices are~\emph{points} in the plane and edges are~\emph{strings} connecting pairs of points such that no two strings intersect (except possibly at their end points). The same definition holds in more generality, \ie, if the vertices are also allowed to be strings (\autoref{fig:planargraphdeffig}). Let us state this formally.

\begin{Definition} [Planarizable representation of a graph] \label{def:planar}
	A graph $G$ on $n$ vertices and $m$ edges is said to admit a planarizable representation if there are two mutually disjoint sets of strings $V$ and $E\ ($with $|V|=n$ and $|E|=m)$ in the plane such that
	\begin{itemize}
		\item the strings of $V$ correspond to the vertices of $G$ and the strings of $E$ correspond to the edges of $G$;
		\item no two strings of $V$ intersect;
		\item no two strings of $E$ intersect, except possibly at their end points;
		\item apart from its two end points, a string of $E$ does not intersect any string of $V$;
		\item for every vertex $v$ and every edge $e=(x,y)$ of $G$, an end point of the string corresponding to $e$ intersects the string corresponding to $v$ if and only if $v=x$ or $v=y$.
	\end{itemize}
	\autoref{fig:planargraphdeffig} illustrates a planar graph and a planarizable representation of it.
\end{Definition}

\begin{Lemma} \label{lem:regstringrepplanar}
	A graph admits a planarizable representation if and only if it is planar.
\end{Lemma}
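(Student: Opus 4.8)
The plan is to prove the two implications of the biconditional separately; ``planar $\Rightarrow$ planarizable'' is routine, and ``planarizable $\Rightarrow$ planar'' contains the real content. For the forward direction, starting from a planar graph $G$ I would fix a straight-line planar drawing (F\'ary's theorem): vertex $v$ at a point $p_v$, edge $xy$ drawn as the segment $[p_x,p_y]$, with segments pairwise meeting only at shared endpoints. Around each $p_v$ choose a tiny disk $B(p_v,\delta_v)$, the disks pairwise disjoint and each meeting only the segments of edges incident to $v$; then let $\curve{v}$ be (almost all of) the circle $\partial B(p_v,\delta_v)$, with a small arc cut out at a direction used by no edge at $v$, and let $\curve{e}$, for $e=xy$, be the portion of $[p_x,p_y]$ lying outside the two disks at its ends. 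Verifying the five bullets of \autoref{def:planar} is then entirely local: vertex-strings are disjoint (disjoint disks); an edge-string leaves the disk at each of its ends radially, so it meets the two vertex-strings at $e$'s ends only at its own endpoints and meets no other vertex-string; and two edge-strings at a common vertex leave that vertex in distinct directions, so after truncation they are disjoint, while non-adjacent edge-strings were disjoint already.

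For the reverse direction I would regard the plane as sitting inside the sphere $S^2=\mathbb{R}^2\cup\{\infty\}$ (all strings are compact, hence bounded, so they avoid $\infty$) and then \emph{collapse each vertex-string $\curve{v}$ to a single point}. Formally, let $q\colon S^2\to S^2/{\sim}$ be the quotient map of the upper semicontinuous decomposition of $S^2$ whose only non-degenerate classes are the finitely many pairwise disjoint arcs $\curve{v}$. Since an arc (and a point) does not separate $S^2$, Moore's decomposition theorem gives $S^2/{\sim}\cong S^2$; write $p_v:=q(\curve{v})$.

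It then remains to check that $\{p_v\}$ together with the arcs $\{q(\curve{e})\}$ is a planar embedding of $G$, which is pure bookkeeping from \autoref{def:planar}. Each edge-string has its interior disjoint from every vertex-string, so $q$ is injective on $\curve{e}$ and $q(\curve{e})$ is an arc, whose two ends map to $p_x$ and $p_y$ (distinct since $G$ is loopless). Any common point of two edge-strings is a shared endpoint and hence lies on a vertex-string incident to both edges; as vertex-strings are disjoint, non-adjacent edge-strings are disjoint and two edges $e=xy$, $e'=xz$ meet only along $\curve{x}$, so after collapsing $q(\curve{e})$ and $q(\curve{e'})$ meet only at $p_x$ (and non-adjacent ones not at all). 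Finally $q(\curve{e})\ni p_v$ iff $\curve{e}$ meets $\curve{v}$ iff $v\in e$, so each edge-arc passes through exactly its two endpoint-vertices. That is precisely a planar embedding of $G$ on $S^2$, and deleting any point of $S^2$ off the (finite) graph shows $G$ is planar.

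The one genuinely topological point --- and the main obstacle --- is justifying the collapse. The clean route is to cite Moore's theorem (an upper semicontinuous decomposition of $S^2$ into non-separating continua has quotient homeomorphic to $S^2$); a reader wanting an elementary argument can instead take pairwise disjoint closed disk neighbourhoods $N_v\cong\overline{\mathbb{D}}$ of the arcs $\curve{v}$, small enough to meet only the edge-strings incident to $v$, truncate those edge-strings at $\partial N_v$, and use the homeomorphism carrying $\curve{v}$ onto a diameter of $\overline{\mathbb{D}}$ to slide the finitely many incoming arc-ends to the centre without creating crossings. Everything else reduces to the combinatorial check in the previous paragraph.
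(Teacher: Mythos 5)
Your proof is correct, and your ``elementary argument'' for the reverse direction is essentially the paper's own proof: the paper takes a closed $\varepsilon$-neighbourhood $R_v$ of each vertex-string $\curve{v}$ (your $N_v$), deletes the portions of incident edge-strings inside $R_v$, and reconnects the truncated ends to a common interior point $p_v$ by pairwise disjoint arcs, using only that $R_v$ is simply connected. What you add beyond the paper is (i) a genuinely more formal justification of the collapse via Moore's decomposition theorem on $S^2$, which sidesteps the (mild, but real) tameness issue of why an $\varepsilon$-neighbourhood of an arbitrary arc is a disk in which the reconnection can be done without crossings, and (ii) an explicit verification of the forward direction via F\'ary's theorem and boundary circles of small disks, which the paper dismisses as ``easy to see.'' The trade-off is the usual one: Moore's theorem is heavier machinery than the paper wants to invoke for a lemma it regards as obvious, but it makes the one topological step airtight; your bookkeeping that the quotient is a planar embedding (non-adjacent edge-strings disjoint because shared endpoints must lie on a common vertex-string, edge interiors avoiding all vertex-strings so each $q(\curve{e})$ is an arc) is exactly the content the paper leaves implicit. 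Either route is acceptable.
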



\autoref{lem:regstringrepplanar} may seem obvious. For completeness, we provide a formal proof of it in~\autoref{sec:planarrep}. We now use this lemma to prove~\autoref{thm:mainrestat} (a).

\begin{proof}[Proof of~\autoref{thm:mainrestat} (a)] Given $\apexgraph{G}\in\OneString$, we will show that the planar graph $G$ is a yes-instance of $\PHPCa$ $\PHPCb$. Let $C$ be a $\OneString$ representation of $\apexgraph{G}$ in the plane. It is helpful to follow~\autoref{fig:phpc1string} while reading this proof. We will use $C$ to construct a graph $\planar{G}$ with the following properties.
	\begin{enumerate}
		\item[(a)] $\planar{G}$ is a supergraph of $G$ on the same vertex set as $G$.
		\item[(b)] $\planar{G}$ is planar.
		\item[(c)] $\planar{G}$ has a Hamiltonian path.
	\end{enumerate}
	Note that (a), (b), (c) together imply that $G$ is a subgraph of a planar graph with a Hamiltonian path (\ie, $G$ is a yes-instance of $\PHPC$). Let $n=|V(G)|$ and assume that $n\geq 4$. Along with our construction of $\planar{G}$, we will also describe $\draw(\planar{G})$, a planarizable representation (\autoref{def:planar}) of $\planar{G}$ in the plane.
	
	In $C$, consider the strings corresponding to the $n$ original vertices (the ones in blue in~\autoref{fig:phpc1string}) of $G$. Since the original vertices form an independent set in $\apexgraph{G}$, the blue strings are pairwise disjoint. We add these $n$ strings to $\draw(\planar{G})$, which correspond to the $n$ vertices of $\planar{G}$.
	
	
	\begin{figure}
		\vspace{-4em}
		\begin{centering}
			\resizebox{1\textwidth}{!}{
				\begin{tikzpicture}
				
				\def\A{-4.5,0}
				\def\B{-6.5,-2}
				\def\C{-6.5,2}
				\def\D{-6.5,0}
				\def\E{-8.5,0}
				
				\renewcommand{\vertexset}{(a,\A,blue!60!white,,,blue!60!white),(b,\B,blue!60!white,,,blue!60!white),(c,\C,blue!60!white,,,blue!60!white),(d,\D,blue!60!white,,,blue!60!white),(e,\E,blue!60!white,,,blue!60!white)}
				\renewcommand{\edgeset}{
					(a,b),(a,d),(a,c),
					(b,c,,,4),(b,d),(b,e),
					(c,d),(c,e),
					(d,e)}
				\renewcommand{\defradius}{0.07}
				\drawgraph
				
				\filldraw[magenta!70!black] (-6,0) circle (1.5pt);
				\filldraw[magenta!70!black] (-5.5,0) circle (1.5pt);
				\filldraw[magenta!70!black] (-5,0) circle (1.5pt);
				
				\filldraw[magenta!70!black] (-8,0) circle (1.5pt);
				\filldraw[magenta!70!black] (-7.5,0) circle (1.5pt);
				\filldraw[magenta!70!black] (-7,0) circle (1.5pt);
				
				\filldraw[magenta!70!black] (-6,1.5) circle (1.5pt);
				\filldraw[magenta!70!black] (-5.5,1) circle (1.5pt);
				\filldraw[magenta!70!black] (-5,0.5) circle (1.5pt);
				
				\filldraw[magenta!70!black] (-8,0.5) circle (1.5pt);
				\filldraw[magenta!70!black] (-7.5,1) circle (1.5pt);
				\filldraw[magenta!70!black] (-7,1.5) circle (1.5pt);
				
				\filldraw[magenta!70!black] (-6,-1.5) circle (1.5pt);
				\filldraw[magenta!70!black] (-5.5,-1) circle (1.5pt);
				\filldraw[magenta!70!black] (-5,-0.5) circle (1.5pt);
				
				\filldraw[magenta!70!black] (-8,-0.5) circle (1.5pt);
				\filldraw[magenta!70!black] (-7.5,-1) circle (1.5pt);
				\filldraw[magenta!70!black] (-7,-1.5) circle (1.5pt);
				
				\filldraw[magenta!70!black] (-6.5,0.5) circle (1.5pt);
				\filldraw[magenta!70!black] (-6.5,1) circle (1.5pt);
				\filldraw[magenta!70!black] (-6.5,1.5) circle (1.5pt);
				
				\filldraw[magenta!70!black] (-6.5,-0.5) circle (1.5pt);
				\filldraw[magenta!70!black] (-6.5,-1) circle (1.5pt);
				\filldraw[magenta!70!black] (-6.5,-1.5) circle (1.5pt);
				
				\filldraw[magenta!70!black] (-8,-1.5) circle (1.5pt);
				\filldraw[magenta!70!black] (-9.5,0) circle (1.5pt);
				\filldraw[magenta!70!black] (-8,1.5) circle (1.5pt);
				
				\node at (-4.15,0) {\large $v_5$};
				\node at (-6.5,-2.3) {\large $v_2$};
				\node at (-6.5,2.3) {\large $v_4$};
				\node at (-6.2,0.3) {\large $v_1$};
				\node at (-8.85,0) {\large $v_3$};
				
				
				\draw (4,0)[magenta!70!black, ultra thick] circle [x radius=4cm, y radius=2cm];
				\draw (4,0)[magenta!70!black, ultra thick] circle [x radius=4cm, y radius=2cm];
				
				\node[magenta!70!black] at (8.1,-1.1) {\LARGE $\curve{a}$};
				
				\node[blue!90!white] at (0.7,-2.4) {\large $\curve{v_4}$};
				
				\node[blue!90!white] at (2.5,2.5) {\large $\curve{v_5}$}; 
				
				\node[blue!90!white] at (7,2.5) {\large $\curve{v_1}$};
				
				\node[blue!90!white] at (8.7,1.4) {\large $\curve{v_2}$};
				
				\node[blue!90!white] at (6.4,-2.6) {\large $\curve{v_3}$};

				\draw[magenta!70!black, very thin] (0.85,0) -- (1.5,-0.5);\draw[magenta!70!black, very thin] (1.2,-0.4) -- (2.3,-0.4); \draw[magenta!70!black, very thin] (1.9,-0.5) -- (2.9,0);
				
				\draw[magenta!70!black, very thin] (1.7,1) -- (2.3,1.4); \draw[magenta!70!black, very thin]  (2,1.3) -- (2.8,1.3); \draw[magenta!70!black, very thin]  (2.6,1.4) -- (3.2,0.6);   
				
				\draw[magenta!70!black, very thin] (4.5,-1) -- (5,-0.5);\draw[magenta!70!black, very thin] (4.5,-0.6) -- (5,-0.6); \draw[magenta!70!black, very thin] (4.6,-0.7) -- (4.8,-0.1); 
				
				\draw[magenta!70!black, very thin] (5.3,0.8) -- (6.8,0); \draw[magenta!70!black, very thin] (5.9,0.6) -- (5,0.7); \draw[magenta!70!black, very thin] (5.2,0.5) -- (4.3,1.8); 
				
				\draw[magenta!70!black, very thin] (-0.5,-0.1) .. controls (-3,4) and (9,6) ..(9.1,2.3); \draw[magenta!70!black, very thin] (9.2,2.7) -- (8.7,1.8); \draw[magenta!70!black, very thin] (-0.4,0.2) -- (-0.85,-0.5); 
				
				\draw[magenta!70!black, very thin] (8.1,0.5) .. controls (11,-3) and (9,-2.5) ..(7,-1.9); \draw[magenta!70!black, very thin] (8.2,1.6) -- (8.3,0); \draw[magenta!70!black, very thin] (7.5,-1.9) -- (5.8,-2.3); 
				
				\draw[magenta!70!black, very thin] (1.5,-2.3) -- (3.35,-2.6); 	\draw[magenta!70!black, very thin] (2.5,-2.6) -- (4.7,-2.2); 	\draw[magenta!70!black, very thin] (3.7,-2.2) -- (5.8,-2.5); 
				
				\draw[magenta!70!black, very thin] (-0.1,0.5) .. controls (-1,2) and (2,3) .. (2.5,3.2);  \draw[magenta!70!black, very thin]  (-0.1, 0.8) -- (-0.4,-0.3); \draw[magenta!70!black, very thin]  (2,3.2) -- (3.1,2.8); 
				
				\draw[magenta!70!black, very thin] (3.15,2.6) .. controls (3.8,2) and (6.5,4.8) .. (7.8,3);
				
				\draw[magenta!70!black, very thin] (7.6,3.2) .. controls (7.5,3.2) and (8.8,3) .. (8.2,2.5);
				\draw[magenta!70!black, very thin] (8.25,2.7) --(8.4,1.7);

				\draw[blue!60!white, ultra thick] (1,0) .. controls (-5,-1) and (5,-2) .. (0,-3); 
				\draw[blue!60!white, ultra thick] (2,1) .. controls (0,1) and (4,2) .. (3,3); 
				\draw[blue!60!white, ultra thick] (4,-1) .. controls (8,-1) and (5,-2) .. (6,-3); 
				\draw[blue!60!white, ultra thick] (5,0) .. controls (-1,-3) and (7,6) .. (8,2); 
				\draw[blue!60!white, ultra thick] (6,-1) .. controls (8,-1) and (5,0) .. (9,2); 
				
				\draw[white, fill=white] (3,1.9) rectangle (4.2,2.1);
				
				\filldraw (4.55,1.98) circle (1.5pt) node[anchor=north west] {$p_1$};
				\filldraw (7.45,1.01) circle (1.5pt) node[anchor=east] {$p_2$\text{ }};
				\filldraw (6.1,-1.7) circle (1.5pt) node[anchor=south east] {$p_3$};
				\filldraw (0.03,-0.19) circle (1.5pt) node[anchor=south west] {$p_4$};
				\filldraw (2.48,1.85) circle (1.5pt) node[anchor=north west] {$p_5$};
				
				\end{tikzpicture}
			}
			\caption{(Left) $\subdivide{G}$ ($k=3$) for a planar graph $G$. (Right) $C$, a planarizable representation of $\apexgraph{G}$. The bold red string $\curve{a}$ denotes the apex vertex of $\apexgraph{G}$. The blue strings denote the original vertices of $G$, and the thin red strings are $\curve{u^1_e}$, $\curve{u^2_e}$ and $\curve{u^3_e}$, corresponding to the subdivision vertices of $G$.}\label{fig:phpc1string}
		\end{centering}
	\end{figure}
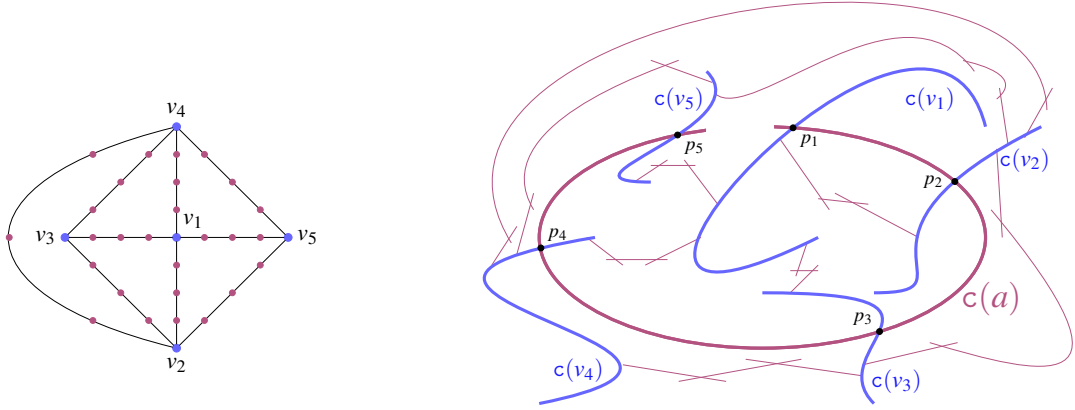
	
	
	\medskip\noindent\textbf{Proof of (c):} So far, $\planar{G}$ has no edge. We will now add $n-1$ edges to $\planar{G}$ to connect these vertices via a Hamiltonian path. Recall that each of the $n$ original vertices is adjacent to the apex vertex $a$ in $\apexgraph{G}$, which means that each of the $n$ blue strings intersects $\curve{a}$ at precisely one point (since $C$ is a $\OneString$ representation). Starting from one end point of $\curve{a}$ and travelling along the curve $\curve{a}$ until we reach its other end point, we encounter these $n$ points one-by-one. Let $(v_1,v_2,\ldots,v_n)$ be the order in which they are encountered.
	
	For each $i\in[n]$, let $p_i$ be the point at which $\curve{v_i}$ intersects $\curve{a}$. For each $i\in[n-1]$, let $\mathtt{s_i}$ be the substring of $\curve{a}$ between $p_i$ and $p_{i+1}$. Add the strings $\mathtt{s_1}, \mathtt{s_2}, \ldots, \mathtt{s_{n-1}}$ as edges to $\draw(\planar{G})$, where $\mathtt{s_i}$ represents the edge between $v_i$ and $v_{i+1}$. Thus the edges corresponding to the $n-1$ strings $\mathtt{s_1}, \mathtt{s_2}, \ldots, \mathtt{s_{n-1}}$ constitute a Hamiltonian path $(v_1,v_2,\ldots,v_n)$ in $\planar{G}$. This shows (c).
	
	\medskip\noindent\textbf{Proof of (a):} To show (a), we need to add all the edges of $G$ to $\planar{G}$ (other than those already added by the previous step), so that $\planar{G}$ becomes a supergraph of $G$. For each edge $e=v_iv_j\in E(G)$, there are $k$ strings $\curve{u^1_e}, \curve{u^2_e}, \ldots, \curve{u^k_e}$ (corresponding to the subdivision vertices $u^1_e, u^2_e, \ldots, u^k_e$ in $\apexgraph{G}$) in $C$. Note that for each $t\in\{1,2,\ldots,k\}$, the string $\curve{u^t_e}$ intersects exactly two other strings. Let $\mathtt{s}(u^t_e)$ be the substring of $\curve{u^t_e}$ between those two intersection points. Let $\mathtt{s_e}$ be the string obtained by concatenating the $k$ substrings thus obtained.
	\begin{equation} \label{eq:concat}
	\mathtt{s_e}\triangleq \bigcup\limits_{t=1}^{k} \mathtt{s}(u^t_e).
	\end{equation}
	If the edge $e=v_iv_j$ is not already present in $\planar{G}$, then add the string $\mathtt{s_e}$ to $\draw(\planar{G})$, where $\mathtt{s_e}$ represents the edge between $v_i$ and $v_j$ (one end point of $\mathtt{s_e}$ lies on $\curve{v_i}$ and the other on $\curve{v_j}$). This completes the construction of $\draw(\planar{G})$, and shows (a).
	
	\medskip\noindent\textbf{Proof of (b):} To show (b), it is enough to show that $\draw(\planar{G})$ is a planarizable representation of $\planar{G}$ (\autoref{lem:regstringrepplanar}). Note that there are three types of strings in $\draw(\planar{G})$: (i) substrings of $\curve{a}$, (ii) strings of the type $\mathtt{s_e}$, for some $e=v_iv_j\in E(G)$, and (iii) $n$ strings corresponding to the original vertices of $G$.
	
	Two strings of type (i) are either disjoint or intersect at their end points, since $\curve{a}$ is non-self-intersecting. More precisely, for each $i\in[n-1]$, the point $p_{i+1}$ (the unique intersection point of $\mathtt{s_i}$ and $\mathtt{s_{i+1}}$) lies on $\curve{v_{i+1}}$, which denotes a vertex in $\draw(\planar{G})$. A string of type (ii) intersects exactly two strings, $\curve{v_i}$ and $\curve{v_j}$, which denote vertices in $\draw(\planar{G})$. Finally, strings of type (iii) are mutually disjoint. This shows (b).
\end{proof}


\subsection{Proof of Theorem~\ref{thm:mainrestat} (b)} \label{subsec:4apex2dir}

In this section, we will show that if $G$ is a yes-instance of $\PHPCa$ $\PHPCb$, then $\apexgraph{G}$ is in $\PureTwoDir$. In other words, if $G$ is a subgraph of a planar graph with a Hamiltonian path, then $\apexgraph{G}$ has a $\PureTwoDir$ representation.


First we elucidate the main idea behind our proof. Given a front line drawing (\autoref{fig:representation} (a)) of $G$, we will modify it to obtain a $\PureTwoDir$ representation (\autoref{fig:representation} (d)) of $\apexgraph{G}$. The apex segment $\curve{a}$ takes the place of the front line $\ellfront$. The edges of $G$, which were strings in the front line drawing, are replaced by rectilinear piecewise linear curves. If we were allowed a large number of rectilinear pieces for each edge, then this construction is trivial, since every curve can be viewed as a series of infinitesimally small vertical and horizontal segments. Our proof formally justifies that this can always be done when the number of allowed rectilinear pieces is a fixed odd integer greater than or equal to three.

\begin{proof}[Proof of~\autoref{thm:mainrestat} (b)] Given $G$, a yes-instance of $\PHPCa$ $\PHPCb$, we will construct $C$, a $\PureTwoDir$ representation of $\apexgraph{G}$. Recall that the construction of $\apexgraph{G}$ from $G$ uses an intermediate graph $\subdivide{G}$, where $k$ is an odd integer greater than or equal to three. $$G\rightarrow \subdivide{G}\rightarrow \apexgraph{G}.$$

Our proof is by induction on $k$. The major portion of this proof is for $k=3$. At the end, we will show that if the proof works for $k$, then it also works for $k+2$, and consequently for all odd integers $k\geq 3$.

\paragraph{Base case $(k=3)$:} Since $G$ is a subgraph of a planar graph with a Hamiltonian path (\autoref{def:planhampathcomp}), $G$ admits a front line drawing (\autoref{fig:representation} (a)), as per~\autoref{obs:frontlinedrawing}. Our construction of $C$ (\autoref{fig:representation} (d)) is based on and visually motivated by this drawing.

Let $n=|V(G)|$ and assume that $n\geq 4$. Let $(v_1,v_2,\ldots,v_n)$ be the ordering of the vertices on the front line $\ellfront$ from bottom to top. Let $\Eleftedge$ be the set of edges that lie entirely to the left of $\ellfront$, $\Erightedge$ be the set of edges that lie entirely to the right of $\ellfront$, and $\Ecrossedge$ be the set of crossover edges (\autoref{def:frontlinedrawing}). Further, $\Ecrossedge$ is the disjoint union of $\Eaboveedge$ and $\Ebelowedge$, where $\Eaboveedge$ is the set of crossover edges that go above $\ellfront$, and $\Ebelowedge$ is the set of crossover edges that go below $\ellfront$. For example, $v_1v_3\in\Eleftedge$, $v_2v_5\in\Erightedge$ and $v_5v_7\in\Ebelowedge$ in~\autoref{fig:representation} (a).

Given two points $p$ and $q$, we define $\ell(p,q)$ as the line segment connecting them. Given an edge $e$, we define $\area(e)$ as the region enclosed in the closed loop (including the boundary) defined by the union of the edge $e=v_iv_j$ and the line segment $\ell(v_j,v_i)$. (Note that $\ell(v_j,v_i)$ lies on $\ellfront$.) This notion of area induces a natural partial order ``$\leq$'' on the edges of $G$. $$e_1\leq e_2\,\Longleftrightarrow\,\area(e_1)\subseteq\area(e_2).$$ For example, $v_2v_6\leq v_2v_7$ and $v_4v_2\leq v_5v_7$ in~\autoref{fig:representation} (a). It is easy to see that $\leq$ is reflexive, anti-symmetric and transitive. Thus, $(E,\leq)$ is a poset. Consider the Haase diagram of this poset, where the minimal elements are placed at the bottom and the maximal elements at the top. For an edge $e$, let $\rank(e)$ be the number of elements (including $e$) on the longest downward chain starting from $e$. For example, $\rank(v_2v_7)=3$ because $v_2v_5\leq v_2v_6\leq v_2v_7$ is the longest downward chain starting from $v_2v_7$ in~\autoref{fig:representation} (a). Note that the $\rank$ of all the minimal elements is one. We are now all set to construct $C$, our $\PureTwoDir$ representation of $\apexgraph{G}$ for $k=3$ (\autoref{fig:representation} (d)).

\begin{mdframed}[style=MyFrame,backgroundcolor=blue!1!white]
	
	\medskip\noindent\textbf{\underline{The apex vertex $a$:}} Let $\curve{a}$ be the vertical segment $((0,0.5),(0,n+0.5))$.
	
	\medskip\noindent\textbf{\underline{The vertices $\{v_1,v_2,\ldots,v_n\}$:}} For each $i\in[n]$, let $\curve{v_i}$ be the horizontal segment $((-a_i-0.1,i),(b_i+0.1,i))$, where $a_i$ and $b_i$ are defined as follows.
	\begin{align}
	a_i&=\max(\{0\}\cup\{\rank(e)\mid e \text{ is incident to $v_i$ and $e$ connects to $v_i$ from the left of $\ellfront$}\});\label{eq:ai}\\
	b_i&=\max(\{0\}\cup\{\rank(e)\mid e\text{ is incident to $v_i$ and $e$ connects to $v_i$ from the right of $\ellfront$}\}).\label{eq:bi}
	\end{align}
	The $\{0\}$ set is included to ensure that the argument for the $\max$ operation is not an empty set.
	
	\medskip\noindent\textbf{\underline{The vertices $\{u^1_e,u^2_e,u^3_e\}$:}} For each edge $e$ of $G$, we define a set of four points $\ell_e=(\alpha_e,\beta_e,\gamma_e,\delta_e)$, such that
	\begin{align*}
	\curve{u^1_e}&=\ell(\alpha_e,\beta_e);\\
	\curve{u^2_e}&=\ell(\beta_e,\gamma_e);\\
	\curve{u^3_e}&=\ell(\gamma_e,\delta_e).
	\end{align*}
	We may think of $\ell_e$ as a piecewise linear curve with three pieces.
	Let $\zeta:E\rightarrow\{1,2,\ldots,n^2\}$ be an injective function\footnote{One way to formulate such a function is by setting $\zeta(e)=ni+j$ for each $e=v_iv_j$, where $1\leq i<j\leq n$.}, \ie, $\zeta$ maps each edge $e$ of $G$ to a distinct number from the set $\{1,2,\ldots,n^2\}$. For each edge $e$, let
	\begin{equation} \label{eq:xposrank}
	\xpos(e)=\rank(e)+\frac{\zeta(e)}{n^4}.
	\end{equation}
	Let $\varepsilon_k=1/(k^2n^5)$ (for this construction, $k=3$). We use $\xpos$ and $\varepsilon_k$ to define the four points of $\ell_e$, for each $e$.
	\begin{align*}
	\ell_e =
	\begin{cases} 
	(\left(-\xpos(e),i\right),\left(-\xpos(e),\frac{i+j}{2}\right),\left(-\xpos(e)-\varepsilon_k,\frac{i+j}{2}\right),\left(-\xpos(e)-\varepsilon_k,j\right)) &\text{if } e\in\Eleftedge; \\
	(\left(\xpos(e),i\right),\left(\xpos(e),\frac{i+j}{2}\right),\left(\xpos(e)+\varepsilon_k,\frac{i+j}{2}\right),\left(\xpos(e)+\varepsilon_k,j\right)) &\text{if } e\in\Erightedge; \\
	(\left(-\xpos(e),i\right),\left(-\xpos(e),n+\rank(e)\right),\left(\xpos(e),n+\rank(e)\right),\left(\xpos(e),j\right)) &\text{if } e\in\Eaboveedge; \\
	(\left(-\xpos(e),i\right),\left(-\xpos(e),-\rank(e)\right),\left(\xpos(e),-\rank(e)\right),\left(\xpos(e),j\right)) &\text{if } e\in\Ebelowedge.
	\end{cases}
	\end{align*}
\end{mdframed}
Let $\Grep{G}$ be the intersection graph of $C$. We need to show that $\apexgraph{G}=\Grep{G}$. First, let us understand the idea behind our construction of $C$.

We consider each $\ell_e$ a single (piecewise linear) segment. Note that in all four cases of its definition, $\ell_e$ always consists of two vertical segments $\left(\curve{u^1_e}, \curve{u^3_e}\right)$ and one horizontal segment $\left(\curve{u^2_e}\right)$. Also, the x-coordinate of the vertical segments of $\ell_e$ is essentially the $\rank$ (or the negation of the $\rank$) of $e$ (\autoref{eq:xposrank}). The $\zeta(e)/n^4$ term (and also the $\varepsilon_k$ term) is simply a tiny perturbation added to its x-coordinate to ensure that the vertical parts of $\ell_e$ do not intersect the vertical parts of any other $\ell_{e'}$. ($\zeta$ was chosen to be an injection for precisely this reason.)

Since $\apexgraph{G}$ and $\Grep{G}$ have the same vertex set, it is sufficient to show that $e\in E(\apexgraph{G})\Leftrightarrow e\in E(\Grep{G})$ in order to demonstrate their equality.

\medskip\noindent\textbf{Proof of $e\in E(\apexgraph{G})\Rightarrow e\in E(\Grep{G})$:} The $\curve{v_i}$'s are horizontal segments, all intersecting the vertical apex segment $\curve{a}$. Further, the $\curve{v_i}$'s are made to extend as far (to the left and right of $\curve{a}$) as the maximum $\rank$ (plus an additional $\pm 0.1$) of their incident edges (\autoref{eq:ai},~\autoref{eq:bi}). This ensures that they intersect the vertical segments of all their corresponding $\ell_e$'s. The fact that the $\curve{u^1_e}$'s and $\curve{u^3_e}$'s intersect their corresponding $\curve{u^2_e}$'s is implicit from the definition of $\ell_e$.

\medskip\noindent\textbf{Proof of $e\notin E(\apexgraph{G})\Rightarrow e\notin E(\Grep{G})$:} Note that $C$ has three types of segments: (i) the apex ($\curve{a}$), (ii) the horizontal segments $\curve{v_i}$, and (iii) the piecewise linear segments $\ell_e$. We will consider all pairs of non-adjacent vertices $(p,q)$ of $\apexgraph{G}$, and show that $\curve{p}$ and $\curve{q}$ do not intersect in $C$. We have three cases.

\medskip\noindent\textit{\underline{Case 1: one of $\curve{p}$ or $\curve{q}$ is of type (i).}} Let us say $\curve{p}$ is of type (i), \ie, $p$ is the apex vertex $a$. Then $\curve{p}=((0,0.5),(0,n+0.5))$, and $\curve{q}$ must be of type (iii) (since all type (ii) vertices are adjacent to $a$). Note that the x-coordinates of the vertical pieces of all the $\ell_e$'s in $\Eleftedge\cup\Erightedge$ are either less than $-0.1$ or greater than $0.1$, and the y-coordinates of the horizontal pieces of all the $\ell_e$'s in $\Eaboveedge\cup\Ebelowedge$ are either greater than $n+0.5$ or less than $0.5$. Therefore, $\curve{p}$ intersects none of the $\ell_e$'s.

\medskip\noindent\textit{\underline{Case 2: one of $\curve{p}$ or $\curve{q}$ is of type (ii).}} Let us say $\curve{p}$ is of type (ii). If $\curve{q}$ is also of type (ii), then we are done, since all the $\curve{v_i}$'s are mutually disjoint (\autoref{eq:ai},~\autoref{eq:bi}). If $\curve{q}$ is of type (iii), then let $\curve{q}$ be a piece of $\ell_e$ for some edge $e$, and let $p=v_k$ for some $k\in[n]$ such that $e$ is not incident to $v_k$.

If $v_k\notin\area(e)$, then all segments of $\ell_e$ lie above $\curve{v_k}$ or all segments of $\ell_e$ lie below $\curve{v_k}$, implying that $\ell_e$ and $\curve{v_k}$ do not intersect. If $v_k\in\area(e)$, then $\area(e)$ lies on at least one of the sides (left/right) of $v_k\in\ellfront$. Let $e'$ be an edge of maximum $\rank$ incident to $v_k$ such that $e'$ connects to $v_k$ from the same side (left/right) of $\ellfront$ as $\area(e)$. (If no such edge exists, then the $\{0\}$ set (\autoref{eq:ai},~\autoref{eq:bi}) comes into play, and we are done, as $\curve{v_k}$ falls short of $\ell_e$.) Recall that $\area(e)$ comprises of $e$ and the front line $\ellfront$. Using the facts that $e'$ does not cross $e$ or $\ellfront$ and that $e'$ connects to $v_k$ from the interior of $\area(e)$, we obtain that $e'$ lies entirely inside $\area(e)$. Therefore, $$\area(e')\subseteq\area(e)\ \Rightarrow\ e'\leq e\ \Rightarrow\ \rank(e')<\rank(e)\ \Rightarrow\ \rank(e')+1\leq\rank(e).$$ Note that the vertical pieces of $\ell_e$ are at least $\rank(e)$ units away from the apex segment $\curve{a}$, and the horizontal segment $\curve{v_k}$ only reaches as far as $\rank(e')+0.1<\rank(e)$ units from $\curve{a}$ in the direction of $\ell_e$. Thus, $\curve{v_k}$ and $\ell_e$ do not intersect. Also, all the horizontal pieces of $\ell_e$ of non-$\varepsilon_k$ length belong to edges of $\Ecrossedge$, which lie above $\curve{a}$ or below $\curve{a}$, and thus none of them intersect $\curve{v_k}$.

\medskip\noindent\textit{\underline{Case 3: both $\curve{p}$ and $\curve{q}$ are of type (iii).}} Let $\curve{p}$ be a piece of $\ell_{e_p}$ and $\curve{q}$ be a piece of $\ell_{e_q}$, for some edges $e_p$ and $e_q$ of $G$. If they are both vertical pieces or one of them is a horizontal piece of length $\varepsilon_k$, then the $\zeta$ function guarantees that they do not intersect.

Thus, the only remaining case is if one of them (say $\curve{p}$) is a horizontal piece of non-$\varepsilon_k$ length (\ie, $e_p$ is a crossover edge). Let $e_p\in\Eaboveedge$ (the proof for $e_p\in\Ebelowedge$ is similar). The y-coordinate of $\curve{p}$ is greater than $n+0.5$. If $e_q\in\Eleftedge\cup\Erightedge\cup\Ebelowedge$, then all segments of $\ell_{e_q}$ lie below $\curve{p}$, and we are done. If $e_q\in\Eaboveedge$, then note that the all the edges contained in $\Eaboveedge$ constitute a total order (or chain) in the poset $(E,\leq)$. Thus either $\rank(e_p)<\rank(e_q)$ or $\rank(e_q)<\rank(e_p)$. Let $\rank(e_p)<\rank(e_q)$ (the proof for $\rank(e_q)<\rank(e_p)$ is similar). All the pieces of $\ell_{e_q}$ are at least $\rank(e_p)+0.5$ units away from the apex segment $\curve{a}$, and all the pieces of $\ell_{e_q}$ reach less than $\rank(e_p)+0.1$ units away from $\curve{a}$. Hence, $\ell_{e_p}$ and $\ell_{e_q}$ do not intersect.

This completes the proof of the base case ($k=3$) of our induction. A crucial feature of our construction, which we will exploit in our proof of the inductive case, is that for all edges $e$ of $G$, the segment $\curve{u^3_e}$ is a vertical segment.

\paragraph{Induction hypothesis:} Let $k\geq 3$ be an odd integer. Then there exists a $\PureTwoDir$ representation of $\apexgraph{G}$ in which $\curve{u^k_e}$ is a vertical segment for all edges $e$ of $G$.

\paragraph{Induction step:} Given a $\PureTwoDir$ representation of $\apexgraph{G}$ where $\curve{u^k_e}$ is a vertical segment, we will slightly modify it to include two new segments $\curve{u^{k+1}_e}$ and $\curve{u^{k+2}_e}$ for each $e$, such that $\curve{u^{k+2}_e}$ is a vertical segment. Let
\begin{align*}
\curve{u^k_e}&=((\lambda^k_e,\mu^k_e),(\lambda^k_e,\pi^k_e));\\
\sigma^{k+2}_e&=
\begin{cases}
-1 \qquad\text{ if } \lambda^k_e<0;\\
+1 \qquad\text{ if } \lambda^k_e>0.
\end{cases}
\end{align*}
Recall that $\varepsilon_k=1/(k^2n^5)$. Now for each edge $e$ of $G$, we replace the segment $\curve{u^k_e}$ by the following three segments.
\begin{align*}
\curve{u^k_e}&=\displaystyle{\left(\left(\lambda^k_e,\mu^k_e\right),\left(\lambda^k_e,\frac{\mu^k_e+\pi^k_e}{2}\right)\right)};\\
\curve{u^{k+1}_e}&=\displaystyle{\left(\left(\lambda^k_e,\frac{\mu^k_e+\pi^k_e}{2}\right),\left(\lambda^k_e+\sigma^{k+2}_e\varepsilon_{k+2},\frac{\mu^k_e+\pi^k_e}{2}\right)\right)};\\
\curve{u^{k+2}_e}&=\displaystyle{\left(\left(\lambda^k_e+\sigma^{k+2}_e\varepsilon_{k+2},\frac{\mu^k_e+\pi^k_e}{2}\right),\left(\lambda^k_e+\sigma^{k+2}_e\varepsilon_{k+2},\pi^k_e\right)\right)}.
\end{align*}
Note that these three new segments roughly coincide with the segment that they replaced, with a tiny perturbation of $\sigma^{k+2}_e\varepsilon_{k+2}$ made to the x-coordinates of $\curve{u^{k+1}_e}$ and $\curve{u^{k+2}_e}$. Using the induction hypothesis, it is easy to see that $\curve{u^k_e}$, $\curve{u^{k+1}_e}$ and $\curve{u^{k+2}_e}$ intersect the segments that they are adjacent to in $\apexgraph{G}$. The following calculation shows that the $\sigma^{k+2}_e\varepsilon_{k+2}$ perturbation is so minuscule that $\curve{u^{k+1}_e}$ and $\curve{u^{k+2}_e}$ do not intersect any additional segments.
\begin{equation*}
\displaystyle{\left|\sum_{i=3}^{k+2}\sigma^i_e\varepsilon_i\right| \leq \frac{1}{n^5}\left( \sum_{i=3}^{k+2}\frac{1}{i^2}\right) < \frac{1}{n^5}}.
\end{equation*}
Note that for every $e'\neq e$ and every odd $k'$ such that $1\leq k'\leq k+2$, the x-coordinates of $\curve{u^{k'}_{e'}}$ and $\curve{u^k_e}$ differ by roughly $1/n^4$, which is much larger than $1/n^5$. Finally, note that $\curve{u^{k+2}_e}$ is a vertical segment, as promised. This completes the proof.
\end{proof}

\section{Planarizable Representations of Planar Graphs} \label{sec:planarrep}

In this section, we will show~\autoref{lem:regstringrepplanar}, \ie, a graph admits a planarizable representation (\autoref{def:planar}) if and only if the graph is planar.

\begin{proof}[Proof of~\autoref{lem:regstringrepplanar}] It is easy to see that every planar graph admits a planarizable representation. We will show the other direction: every graph that admits a planarizable representation is planar. Let $G$ be a graph with a planarizable representation. Let $v$ be a vertex of $G$, and $\curve{v}$ be its corresponding string. \autoref{fig:planplan} shows the steps of our proof for a given $\curve{v}$. Let $$R_v=\{p \mid p\in\mathbb{R}^2, d(p,\curve{v})\leq\varepsilon\}$$ be the set of points within a closed $\varepsilon$-neighbourhood of $\curve{v}$, choosing $\varepsilon$ small enough so that $\curve{v}$ does not intersect any additional strings. Delete all substrings lying in the interior of $R_v$. Thus all strings that intersected $\curve{v}$ now have one end point on the boundary of $R_v$. Connect all these boundary end points to a common point (say $p_v$) in the interior of $R_v$ via pairwise disjoint substrings (intersecting only at $p_v$) in the interior of $R_v$, effectively ``shrinking'' the region $R_v$ to a single point $p_v$. (This last step is possible because $R_v$ is a simply connected region.) Now the point $p_v$ corresponds to the vertex $v$.
	
	Do this for all the vertices of $G$. Since the vertices are now points, and the edges are strings connecting them, the representation thus obtained is a planar drawing of $G$.
\end{proof}


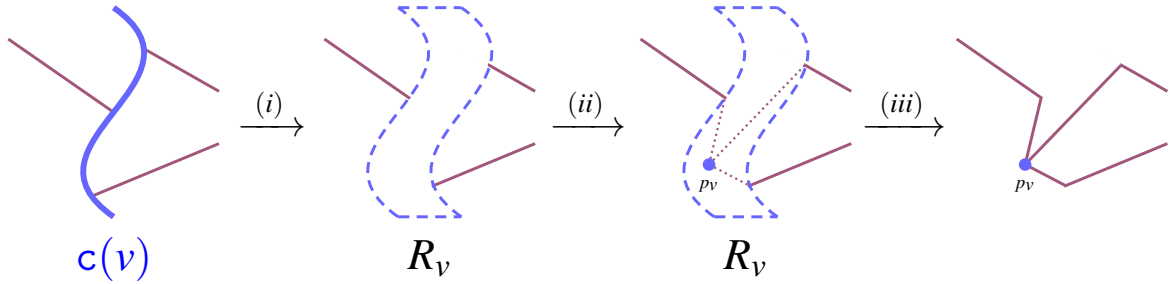
\begin{figure}
	\begin{centering}
		\resizebox{1\textwidth}{!}{
			\begin{tikzpicture}
			
			
			\foreach \i in {-4,-1,...,5}
			{
				\draw [thick, magenta!60!black] (\i,4.7) -- (\i+1,4);
				\draw [thick, magenta!60!black] (\i+2,3.7) -- (\i+0.8,3.2);
				\draw [thick, magenta!60!black] (\i+2,4.2) -- (\i+1.3,4.6);
			}
			
			\foreach \i in {-0.25,-0.2,...,0.3}
			\draw [ultra thick, white] (\i,3) .. controls (\i-1,3.7) and (\i+1,4.3) .. (\i,5);
			
			\draw [ultra thick, blue!60!white] (-3,3) .. controls (-4,3.7) and (-2,4.3) .. (-3,5);
			
			\draw [thick, densely dashed, blue!60!white] (-0.3,3) .. controls (-1.3,3.7) and (0.7,4.3) .. (-0.3,5);
			\draw [thick, densely dashed, blue!60!white] (-0.3,3) -- (0.3,3);
			
			\draw [thick, densely dashed, blue!60!white] (0.3,3) .. controls (-0.7,3.7) and (1.3,4.3) .. (0.3,5);
			\draw [thick, densely dashed, blue!60!white] (-0.3,5) -- (0.3,5);
			
			\foreach \i in {2.75,2.8,...,3.3}
			\draw [ultra thick, white] (\i,3) .. controls (\i-1,3.7) and (\i+1,4.3) .. (\i,5);
			
			\draw [thick, densely dashed, blue!60!white] (2.7,3) .. controls (1.7,3.7) and (3.7,4.3) .. (2.7,5);
			\draw [thick, densely dashed, blue!60!white] (2.7,3) -- (3.3,3);
			
			\draw [thick, densely dashed, blue!60!white] (3.3,3) .. controls (2.3,3.7) and (4.3,4.3) .. (3.3,5);
			\draw [thick, densely dashed, blue!60!white] (2.7,5) -- (3.3,5);
			
			\def \pvx{2.65};
			\def \pvy{3.5};
			
			\draw [semithick, densely dotted, magenta!60!black] (\pvx,\pvy) -- (2.8,4.13);
			
			\draw [semithick, densely dotted, magenta!60!black] (\pvx,\pvy) -- (3.56,4.45);
			
			\draw [semithick, densely dotted, magenta!60!black] (\pvx,\pvy) -- (3.03,3.3);
			
			\filldraw[blue!60!white] (\pvx,\pvy) circle (1.5pt);
			\node at (\pvx,\pvy-0.2) {\tiny $p_v$};
			
			\foreach \i in {5.75,5.8,...,6.3}
			\draw [ultra thick, white] (\i,3) .. controls (\i-1,3.7) and (\i+1,4.3) .. (\i,5);
			
			\def \pvx{5.65};
			
			\draw [thick, magenta!60!black] (\pvx,\pvy) -- (5.8,4.13);
			\draw [thick, magenta!60!black] (\pvx,\pvy) -- (6.56,4.45);
			\draw [thick, magenta!60!black] (\pvx,\pvy) -- (6.03,3.3);
			
			\filldraw[blue!60!white] (\pvx,\pvy) circle (1.5pt);
			\node at (\pvx,\pvy-0.2) {\tiny $p_v$};
			
			\node[blue] at (-3,2.6) {\large $\curve{v}$};
			\node at (0,2.6) {\large $R_v$};
			\node at (3,2.6) {\large $R_v$};
			
			\node[black] at (-1.5,4) {$\xrightarrow{\ (i)\ }$};
			\node[black] at (1.5,4) {$\xrightarrow{\ (ii)\ }$};
			\node[black] at (4.5,4) {$\xrightarrow{\ (iii)\ }$};
			
			\end{tikzpicture}
		}
		\caption{Initially, the vertex $v$ is denoted by the blue string $\curve{v}$, and the edges incident to it are denoted in red. (i) $\curve{v}$ is ``thickened'' to form a region $R_v$ around it. (ii) The end points of the edges on the boundary of $R_v$ are connected to a single point $p_v$ in the interior of $R_v$. (iii) Strings that share an end point on the boundary of $R_v$ are concatenated, and the region $R_v$ is ``shrunk'' to the point $p_v$.
		}
		\label{fig:planplan}
	\end{centering}
\end{figure}

\section{Conclusion}\label{sec:conclude}
\autoref{cor:box} states that recognizing rectangle intersection graphs is $\NP$-hard, even when the inputs are bipartite apex graphs. This raises the following question. Can we recognize planar rectangle intersection graphs in polynomial time? As mentioned earlier, all planar graphs are intersection graphs of 3-dimensional hyper-rectangles (or axis-parallel cuboids)~\cite{thomassen1986interval,felsner2011}. Furthermore, there exist series-parallel graphs (a subclass of planar graphs) that are not rectangle intersection graphs~\cite{bohra2006}.

\autoref{cor:fpt} states that for all graph classes $\cG$ such that $\PureTwoDir\subseteq\cG\subseteq\OneString$, the recognition of $\cG$ is not FPT (fixed-parameter tractable), when parameterized by the apex number of the graph. As our construction produces graphs of large degree, the maximum degree of the graph might be a parameter for which the recognition of $\cG$ is FPT.

\autoref{cor:minorfree} states that recognizing several geometric intersection graph classes is $\NP$-hard, even when the inputs are restricted to $K_6$-minor free graphs. On the other hand, the complexity of finding geometric representations of $K_5$-minor free graphs is unknown. Is it possible to use Wagner's Theorem~\cite{wagner1937eigenschaft} to decide in polynomial time whether a $K_5$-minor free graph is in $\OneString$ (or $\textsc{String}$)? It would also be interesting to study the complexity of recognizing $\String$ when the inputs are restricted to apex graphs.

The crossing number of a graph is the minimum number of edge crossings possible in a plane drawing of the graph. Planar graphs are precisely the graphs with crossing number zero. Schaefer showed that apex graphs can have arbitrarily high crossing number, and also exhibited several graphs with crossing number one~\cite{schaefer2018crossing}. Graph classes with a small crossing number, like $k$-planar graphs~\cite{angelini2020simple}, have also been studied. Therefore, the complexity of recognizing $\OneString$ (or $\String$) when the inputs are restricted to graphs with a small crossing number is another potential direction of research.

Finally, it would be interesting to see if our techniques can be used to prove $\NP$-hardness of recognizing other classes of geometric intersection graphs, like outerstring graphs~\cite{biedl2018} and intersection graphs of grounded \textsc{L}-shapes~\cite{mcguinness1996}. Also, the graph classes we study in this paper are for objects embedded in the plane. The complexity of finding geometric intersection representations of apex graphs (appropriately defined) using curves on other surfaces (\eg, torus, projective plane) is another avenue open for exploration.

\medskip\noindent\textbf{Acknowledgements:} This project has received funding from the European Union’s Horizon 2020 research and innovation programme under grant agreement No. 682203-ERC-[Inf-Speed-Tradeoff]. The authors thank the organisers of \textsc{Graphmasters 2020}~\cite{gkasieniec2020} for providing the virtual environment that initiated this research.

 
\bibliographystyle{alpha}
\bibliography{references}

\end{document}